\declaretheorem[style=plain, numberwithin=section]{theorem}
\declaretheorem[style=definition,name=Definition,qed=$\blacksquare$, numberwithin=section, sibling=theorem]{definition}
\newtheorem{lemma}[theorem]{Lemma}
\newtheorem{corollary}[theorem]{Corollary}
\renewcommand{\epsilon}{\varepsilon}%
\newcommand{\SL}{\operatorname{SL}}
\newcommand{\GL}{\operatorname{GL}}
\newcommand{\SU}{\operatorname{SU}}
\newcommand{\U}{\operatorname{U}}
\newcommand{\CC}{\mathbb C}
\newcommand{\ZZ}{\mathbb Z}
\newcommand{\im}{\mathrm i}
\title[Transversal gates and local symmetries]{Transversal gates of the ((3,3,2)) qutrit code and local symmetries of the absolutely maximally entangled state of four qutrits}
\author{Ian Tan$^*$}
\thanks{$^*$ Corresponding author. Affiliation: Department of Mathematics and Statistics, Auburn University, Auburn, AL, USA. E-mail: \href{mailto:yzt0060@auburn.edu}{yzt0060@auburn.edu}}
\definecolor{darkgreen}{rgb}{0,0.65,0}
\begin{document}

\begin{abstract}
    The group of transversal gates and the group of local symmetries are important features of quantum error correcting codes and pure quantum states, respectively; the former provides fault-tolerant operations on a code while the latter tells us about a state's reachability via stochastic local operations with classical communication. We prove that there exists a bijection between local unitary (LU) orbits of absolutely maximally entangled (AME) states in $(\mathbb{C}^D)^{\otimes n}$ where $n$ is even, also known as perfect tensors, and LU orbits of $((n-1,D,n/2))_D$ quantum error correcting codes. Furthermore, there is a close connection between the local symmetries of an AME state and the transversal gates of its corresponding quantum error correcting code. We explore in detail the 4-qutrit AME state $\ket{\Phi}$ and its corresponding $((3,3,2))_3$ qutrit code $\mathcal{C}$. We show that $\ket{\Phi}$ and $\mathcal{C}$ are both unique up to the action of the LU group. We find generators of the local symmetry group of $\ket{\Phi}$ and the group of transversal gates on $\mathcal{C}$. Our proofs rely on prior results by Huber and Grassl (2020), Hebenstreit et al. (2016), and Rather et al. (2023). We use Vinberg's theory of Lie algebras to study the special case of $\ket{\Phi}$ and $\mathcal{C}$.
\end{abstract}
\maketitle

\section{Introduction}
Quantum error correction is a critical safeguard for any robust quantum protocol, as it preserves the integrity of fragile qudits (a \textit{qudit} is a $D$-level quantum system; if $D=2$ or $D=3$, we call them \textit{qubits} or \textit{qutrits} respectively) against environmental noise to ensure that computations and communications remain reliable. A quantum error correcting code (QECC) is a carefully selected subspace of a larger Hilbert space, where a logical qudit is embedded across multiple physical qudits to shield it from local errors. The transversal gates of a QECC are especially useful, as they are implemented by local unitaries and thus enable fault-tolerant operations on encoded qudits that do not propagate local errors.

Entanglement is a fundamental resource that enables many quantum information protocols to outperform their classical counterparts. An important class of highly entangled quantum states is the absolutely maximally entangled (AME) states. These AME states have found applications in quantum secret sharing \cite{PhysRevA.86.052335}, open-destination teleportation \cite{Helwig:2013qoq}, quantum information masking \cite{PhysRevA.104.032601}.

This paper explores a connection between AME states of an even number of qudits, also known as perfect tensors, and particularly nice QECCs known as maximum distance separable. It begins with a construction of Rains which can be used to produce a quantum error correcting code from an $r$-uniform state. In subsequent work, Huber and Grassl show that, for perfect tensors, this construction can always be reversed. We expand on this idea, first showing that the construction gives a one-to-one correspondence between local unitary (LU) orbits of perfect tensors and LU orbits of quantum error correcting codes. Second, we show that there exists a close relationship between the invertible local operations that fix an $r$-uniform state, known as local symmetries, and the transversal gates of its corresponding code. This adds to the practical consequences of a state's local symmetries, which is known to determine the state's reachability: a state $\ket{\psi}$ is \textit{reachable} if there is a state not equivalent by LU operations that can be deterministically converted into $\ket{\psi}$ by local operations with classical communication \cite{PhysRevLett.118.040503}. After establishing these general results, we turn to a particular example: the 4-qutrit AME state $\ket{\Phi}$ and its corresponding $((3,3,2))_3$ code $\mathcal{C}$. This case is especially interesting because of a connection to Vinberg's theory of graded Lie algebras. We show that the state $\ket{\Phi}$ and the code $\mathcal{C}$ are both unique up to the action of the LU group. Then we calculate the local symmetries of the 4-qutrit AME state and the transversal gates of the $((3,3,2))_3$ code.

We wish to say more about the relevance of Vinberg's theory of graded Lie algebras. The crucial fact is that the 4-qubit state space $(\CC^2)^{\otimes 4}$ and the 3-qutrit state space $(\CC^3)^{\otimes 3}$ can both be embedded into graded Lie algebras so that results from Vinberg's paper \cite{Vinberg_1976} may be applied. For example, Vinberg's theory has been used to extend to other cases the observation of D\"ur et al. that three qubits can be entangled in different ways \cite{PhysRevA.62.062314}. More specifically, Vinberg's theory allows us to classify the families of pure states that are equivalent by stochastic local operations with classical communication (SLOCC) for four qubits \cites{ChtDjo:NormalFormsTensRanksPureStatesPureQubits,dietrich2022classification} and three qutrits \cites{Nurmiev_2000,di2023classification}. More recent work uses Vinberg's theory to find 4-qubit and 3-qutrit states that are highly entangled with respect to LU invariant polynomials \cite{Jaffali_2024,4qubit}. However, connections between Vinberg's theory and QECCs seem to have been overlooked. To our knowledge, it has not been previously mentioned in the literature that the unique $((4,4,2))_2$ and $((3,3,2))_3$ codes are Cartan subspaces or that their associated Weyl groups are their transversal gate groups. We attempt to bring more exposure facts like these and more broadly to consolidate various results at the intersection of geometric invariant theory (GIT) and quantum information theory (QIT).

For now, we assume that the reader is familiar with certain concepts and definitions used in the study of QECCs and Vinberg's theory of graded Lie algebras. We point out the following facts, which we prove later in the text.
\begin{enumerate}
    \item[1.] $\mathcal{C}$ is simultaneously a $((3,3,2))_3$ code and a Cartan subspace. \label{item:1}
    \item[2.] The stabilizer of $\mathcal{C}$ (in the QECC sense) is the centralizer of $\mathcal{C}$ (in Vinberg's sense). \label{item:2}
    \item[3.] The group of transversal gates on $\mathcal{C}$ is the Weyl group of $\mathcal{C}$. \label{item:3}
    \item[4.] Every $((3,3,2))_3$ code is $\SU_3^{\otimes 3}$-equivalent to $\mathcal{C}$ while every Cartan subspace of $(\CC^3)^{\otimes 3}$ is $\SL_3^{\otimes 3}$-equivalent to $\mathcal{C}$. \label{item:4}
    \item[5.] Every 3-qutrit AME state is $\SU_3^{\otimes 3}$-equivalent to a point in $\mathcal{C}$ while every semisimple element is $\SL_3^{\otimes 3}$-equivalent to a point in $\mathcal{C}$. \label{item:5}
\end{enumerate}

The first item is the key observation that brings together Vinberg's theory and the theory of QECCs in our setting. Note that $\mathcal{H}_{333}=(\CC^3)^{\otimes 3}$ is the grade-1 subspace of the $\ZZ_3$-graded Lie algebra
\begin{equation}\label{eq:z3grading}
\mathfrak{e}_6\cong \mathfrak{sl}_3^{\times 3}\oplus \mathcal{H}_{333}\oplus \mathcal{H}_{333}^*
\end{equation}
This makes sense of calling $\mathcal{C}\subset\mathcal{H}_{333}$ a Cartan subspace.

The second item is equivalent to
\begin{equation}\label{eq:first}
    \{g\in \mathcal{P}_3:g\ket{\varphi}=\ket{\varphi}\text{ for all }\ket{\varphi}\in\mathcal{C}\}=\{g\in\SL_3^{\otimes 3}:g\ket{\varphi}=\ket{\varphi}\text{ for all }\ket{\varphi}\in\mathcal{C}\},
\end{equation}
where $\mathcal{P}_3$ is the Pauli group on 3 qutrits. This fact follows from a lengthy algebraic computation by Hebenstreit et al. \cite{Hebenstreit}, although they do not state their result in the language of Vinberg or QECCs.

The third item is closely related to \eqref{eq:first} and is equivalent to
\begin{equation*}
    \{g\in \SU_3^{\otimes 3}:g\mathcal{C}=\mathcal{C}\}=\{g\in\SL_3^{\otimes 3}:g\mathcal{C}=\mathcal{C}\}.
\end{equation*}
Note that $\mathcal{P}_3\leq \SU_3^{\otimes 3}$. The proof relies on  \eqref{eq:first} and calculating normalizer elements that correspond to generators of the Weyl group. This simultaneously gives generators of the group of transversal gates, as it is the same as the Weyl group. Having done this, without much additional effort we also find generators of the group of local symmetries of $\ket{\Phi}$.

 The fourth item makes two analogous statements; the first one about $((3,3,2))_3$ codes follows from Rather et al.'s result \cite{PhysRevA.108.032412} on the uniqueness of $\ket{\Phi}$ while the second one about Cartan subalgebras is a basic result from Vinberg's theory. The fifth item also makes two analogous statements; the first is essentially a fact from GIT stated in the language of QIT while the second is another basic result from Vinberg's theory.

In \Cref{sec:QECC_uniform} we go over the necessary preliminaries on QECCs and uniform states. We define QECCs, stabilizer codes, the Pauli group $\mathcal{P}_n$, transversal gates, $r$-uniform states, AME states, and perfect tensors. In \Cref{sec:correspondence} we show that there exists a bijection between LU orbits of normalized perfect tensors and LU orbits of QECCs with certain parameters. We also state the connection between the local symmetries of an $r$-uniform state and the transversal gates of its corresponding code. Then we discuss connections to prior work. We are particularly interested in the perfect tensor $\ket{\Phi}$ and the code $\mathcal{C}$, which are written explicitly in terms of a fixed basis in \Cref{sec:332code}. In \Cref{sec:GIT} we go over the necessary preliminaries from GIT, including the Kempf-Ness theorem and Vinberg's theory of graded Lie algebras. We discuss many, but not all, aspects of Vinberg's theory that have applications in QIT. In particular, we do not discuss the decomposition of states into semisimple and nilpotent components, which is essential in the classification of SLOCC classes \cites{ChtDjo:NormalFormsTensRanksPureStatesPureQubits,dietrich2022classification,Nurmiev_2000,di2023classification}. We describe in detail how $\mathfrak{e}_6$ is a $\ZZ_3$-graded Lie algebra and how the $\SL_3^{\otimes 3}$-module $(\CC^3)^{\otimes 3}$ is the grade-1 subspace of $\mathfrak{e}_6$. \Cref{sec:mainresults} contains our results on $\ket{\Phi}$ and $\mathcal{C}$. We prove \eqref{eq:first}, give generators of the group of transversal gates on $\mathcal{C}$, and give generators of the group of local symmetries of $\ket{\Phi}$.

\section{Quantum error correcting codes and uniform states}\label{sec:QECC_uniform}
\subsection{Some important groups}\label{sec:somegroups} Let $\mathcal{H}=\CC^{d_1}\otimes\dots\otimes \CC^{d_n}$ be the state space of an $n$-partite system with local dimensions $d_i$. In this paper, a \textit{state} is simply a vector $\ket{\varphi}\in\mathcal{H}$. We say that a state is \textit{normalized} if we wish to impose the condition $\braket{\varphi|\varphi}=1$.

We are interested in various subgroups of $\GL_{d_1}\otimes\dots\otimes\GL_{d_n}$ that act on $\mathcal{H}$. For example, the subgroups \[G=\SL_{d_1}\otimes\dots\otimes\SL_{d_n}\quad \text{and}\quad K=\SU_{d_1}\otimes\dots\otimes\SU_{d_n}\] are featured in the Kempf-Ness theorem. It is known that two points of the projectivization $\mathbb{P}\mathcal{H}$ are interchangeable by SLOCC if and only if one can be transformed to the other by an element of $G$ \cite{PhysRevA.62.062314}.
Another important subgroup is the LU group $\U_{d_1}\otimes\dots \otimes\U_{d_n}$. We say that two states or subspaces of $\mathcal{H}$ are \textit{LU equivalent} if one can be transformed to the other by an element of the LU group. Note that two subspaces of $\mathcal{H}$ are LU equivalent if and only if they are equivalent by the action of $K$. This is because subspaces are fixed by a global phase change.

\subsection{Quantum error correcting codes}\label{sec:qecc} In \Cref{sec:qecc,sec:stabilizercodes,sec:transversal} we briefly introduce some concepts in the study of QECCs. More details can be found in references such as \cites{Gottesmanthesis,Nielsen_Chuang,Bacon_2013}. 

Let $\mathcal{B}$ be an orthogonal basis of $\text{End}(\CC^D)$ such that $\mathbb{I}\in\mathcal{B}$. Then we define a basis of $\text{End}((\CC^D)^{\otimes n})$ by
$
\mathcal{E}=\{E_1\otimes\dots\otimes E_n:E_i\in\mathcal{B},\: 1\leq i\leq n\}.
$
The elements $E\in\mathcal{E}$ represent errors that a code tries to correct. The \textit{weight} $\text{wt}(E)$ of an error $E=E_1\otimes\dots\otimes E_n$ is equal to the number $|\{i:E_i\neq \mathbb{I}\}|$ of single-qudit subsystems on which $E$ acts nontrivially.

\begin{definition}
    Let $\mathcal{Q}$ be a $K$-dimensional subspace of $(\CC^D)^{\otimes n}$ with an orthogonal basis $\{\ket{u_i}\}$. We say that $\mathcal{Q}$ is a (quantum error correcting) \textit{code} with parameters $((n,K,d))_D$ if $\bra{u_i}E\ket{u_j}=c(E)\delta_{ij}$ for all $E\in\mathcal{E}$ with $\text{wt}(E)<d$, where $c(E)$ is a constant depending on $E$. We say that $\mathcal{Q}$ is \textit{pure} if $c(E)=0$ whenever $0<\text{wt}(E)<d.$
\end{definition}
The parameter $d$ is the \textit{distance} of the code and represents the amount of error the code can correct: a code with distance $d\geq 2t+1$ can correct errors that affect up to $t$ subsystems. The parameters of a code satisfy the following inequality, known as the \textit{quantum Singleton bound}.

\begin{theorem}[Quantum Singleton bound]\label{thm:singleton}
    If $\mathcal{Q}$ is a code with parameters $((n,K,d))_D$, then $\log_D K\leq n-2(d-1)$.
\end{theorem}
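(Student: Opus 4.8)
The plan is to give the standard information-theoretic proof, exploiting the equivalence between distance and erasure correction together with subadditivity of von Neumann entropy. First I would adjoin an auxiliary reference system $R$ of dimension $K$ and form the maximally entangled encoded state $\ket{\Psi}_{R,1\dots n}=\frac{1}{\sqrt{K}}\sum_i\ket{i}_R\otimes\ket{u_i}$, where $\{\ket{u_i}\}$ is the given orthogonal basis of the code. Because the reduced state $\rho_R=\frac{1}{K}\mathbb I_R$ is maximally mixed, its von Neumann entropy (taken in base $D$) is $S(R)=\log_D K$, which is exactly the quantity to be bounded.

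The crucial translation step is to recast the error-correction hypothesis $\bra{u_i}E\ket{u_j}=c(E)\delta_{ij}$ for $\text{wt}(E)<d$ as a decoupling statement. I would fix a subset $A$ of the $n$ subsystems with $|A|=d-1$ and observe that every error supported on $A$ has weight at most $d-1<d$; since such errors span $\text{End}(\CC^D)^{\otimes|A|}\otimes\mathbb I$, the code condition forces $\mathrm{Tr}_{A^c}(\ket{u_i}\bra{u_j})=\delta_{ij}\,\sigma_A$ for a single fixed state $\sigma_A$ on $A$. This is precisely the assertion that the reduced state of $\ket{\Psi}$ factorizes, $\rho_{RA}=\rho_R\otimes\sigma_A$, whence $S(RA)=S(R)+S(A)$. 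I expect this to be the main obstacle: it is where the abstract weight condition in the definition must be matched to the product structure of $\rho_{RA}$, and getting the indexing and the independence of $\sigma_A$ from $i$ correct is the one genuinely technical point.

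With decoupling established, the inequality follows by entropy bookkeeping. Assuming $n\ge 2(d-1)$, I would partition the subsystems into disjoint sets $A,B$ of size $d-1$ and $C$ of size $n-2(d-1)$. Applying the decoupling property to $A$ and to $B$ gives $S(RA)=S(R)+S(A)$ and $S(RB)=S(R)+S(B)$. Rewriting via purity of $\ket{\Psi}_{RABC}$ as $S(RA)=S(BC)$ and $S(RB)=S(AC)$, then adding and invoking subadditivity $S(BC)\le S(B)+S(C)$ and $S(AC)\le S(A)+S(C)$, the terms $S(A)$ and $S(B)$ cancel to leave $S(R)\le S(C)$. Since $C$ consists of $n-2(d-1)$ qudits of local dimension $D$, we have $S(C)\le n-2(d-1)$, and combining with $S(R)=\log_D K$ yields $\log_D K\le n-2(d-1)$.

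Finally I would dispose of the degenerate range $n<2(d-1)$: taking $B=A^{c}$ (still correctable, as $|B|<d-1$) and $C=\varnothing$, the same computation gives $S(R)\le S(C)=0$, forcing $K=1$ and so $\log_D K=0$, consistent with the stated bound in the only way possible. This keeps the argument uniform across all parameter regimes without appealing to any structure specific to stabilizer or pure codes.
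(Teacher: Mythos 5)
Your argument is correct in substance, and it takes a genuinely different route from the one the paper relies on: the paper gives no proof of its own but defers to Rains \cite{782103} and Huber--Grassl \cite{Huber2020}, whose arguments run through quantum weight enumerators (the Shor--Laflamme enumerators and their MacWilliams-type inequalities), whereas you give the standard entropic proof via a purifying reference system and subadditivity. Your key translation step is sound: for $|A|=d-1$ the basis errors supported on $A$ all have weight $<d$ and span $\operatorname{End}\bigl((\CC^D)^{\otimes|A|}\bigr)\otimes\mathbb{I}$, so the Knill--Laflamme condition linearizes to $\operatorname{Tr}_{A^c}(\ket{u_i}\bra{u_j})=\delta_{ij}\,\sigma_A$ with $\sigma_A$ the operator representing the linear functional $c$, whence $\rho_{RA}=\rho_R\otimes\sigma_A$; the bookkeeping $2S(R)+S(A)+S(B)=S(BC)+S(AC)\le S(A)+S(B)+2S(C)$ then gives $\log_D K=S(R)\le S(C)\le n-2(d-1)$. (Minor point: the paper's basis is only assumed orthogonal, but taking $E=\mathbb{I}$ in the code condition shows all $\ket{u_i}$ have equal norm, so you may normalize.) The entropic route is more elementary and self-contained; the enumerator route the paper cites yields the purity of MDS codes (\Cref{thm:QMDS-pure}) and shadow-type bounds as byproducts, which is presumably why the paper points there.

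The one genuine flaw is your disposal of the range $n<2(d-1)$. Your computation there correctly forces $S(R)=0$ and hence $K=1$, but $\log_D K=0$ is then strictly greater than $n-2(d-1)<0$, so the displayed inequality is \emph{violated}, not satisfied ``in the only way possible.'' In fact the bound as literally stated fails for $K=1$ in that regime: any one-dimensional subspace vacuously satisfies $\bra{u_1}E\ket{u_1}=c(E)$ for every $E$, so it is an $((n,1,d))_D$ code for arbitrary $d$. This is why the quantum Singleton bound is normally stated for $K\ge 2$ (as in the cited references). With that hypothesis your degenerate-range computation is exactly what is needed --- it shows that no code with $K\ge 2$ exists when $n<2(d-1)$, so the main computation always applies --- but as a proof of the theorem verbatim for all $K\ge 1$ the last step does not close.
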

\begin{proof}
    See \cite[{Theorem 2}]{782103} and \cite[{Theorem 13}]{Huber2020}.
\end{proof}
\begin{theorem}\label{thm:QMDS-pure}
    If $\mathcal{Q}$ is a code with parameters $((n,K,d))_D$ such that $\log_D K= n-2(d-1)$, then $\mathcal{Q}$ is pure.
\end{theorem}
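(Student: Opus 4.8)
The plan is to translate purity into a statement about reduced density matrices and then read it off from the equality case of the weight-enumerator proof of the quantum Singleton bound.

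First I would record a clean reformulation of purity. Let $P$ be the orthogonal projector onto $\mathcal{C}$, so that $P$ has rank $K$ and $\rho=P/K$ is the maximally mixed code state, and take the basis $\{\ket{u_i}\}$ orthonormal. Summing the defining relation $\bra{u_i}E\ket{u_j}=c(E)\delta_{ij}$ over $i=j$ gives $c(E)=\frac{1}{K}\text{Tr}(PE)$, so $\mathcal{C}$ is pure exactly when $\text{Tr}(PE)=0$ for every $E\in\mathcal{E}$ with $0<\text{wt}(E)<d$. Since $\mathbb{I}\in\mathcal{B}$ and $\mathcal{B}$ is orthogonal, the non-identity basis elements are traceless, and for $E=E_S\otimes\mathbb{I}_{S^c}$ one has $\text{Tr}(PE)=K\,\text{Tr}(\rho_S E_S)$. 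Hence $\mathcal{C}$ is pure if and only if every reduced density matrix $\rho_S$ with $|S|\le d-1$ is maximally mixed; equivalently, $\rho$ is $(d-1)$-uniform.

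Next I would bring in the Shor--Laflamme weight enumerators
\[
A_j=\frac{1}{K^2}\sum_{\text{wt}(E)=j}|\text{Tr}(PE)|^2,\qquad B_j=\frac{1}{K}\sum_{\text{wt}(E)=j}\text{Tr}(PEPE^\dagger),
\]
the sums ranging over an orthonormal error basis of weight exactly $j$. The properties I would use are standard: $A_0=B_0=1$; one has $0\le A_j\le B_j$ for all $j$ (a Cauchy--Schwarz estimate $|\text{Tr}(PEP)|^2\le K\,\text{Tr}(PEPE^\dagger P)$); the Knill--Laflamme conditions are equivalent to $A_j=B_j$ for $1\le j\le d-1$; and, by the reformulation above, purity is equivalent to $A_j=0$ for $1\le j\le d-1$.

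The engine of the argument is the quantum MacWilliams identity, which expresses $\{B_k\}$ as the Krawtchouk transform of $\{A_j\}$ for alphabet size $D^2$. Substituting the Knill--Laflamme equalities $A_j=B_j$ $(1\le j\le d-1)$ into this identity and imposing $A_j,B_j\ge 0$ yields a linear program whose feasibility already forces $\log_D K\le n-2(d-1)$, recovering \Cref{thm:singleton}; when equality holds the program is tight at an extreme point at which $A_1=\dots=A_{d-1}=0$, which is purity. The main obstacle is exactly this equality analysis: one must show that saturating the bound forces each individual low-weight enumerator to vanish rather than merely fixing a linear combination of them, and this is where the precise Krawtchouk coefficients must be used. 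I would also warn against a tempting shortcut: purifying $\rho$ and running a strong-subadditivity argument on the decoupling conditions $I(S:R)=0$ coming from $(d-1)$-erasure correction reproves the bound, but its equality case only gives that marginals of size $n-2(d-1)$ are maximally mixed, which is weaker than $(d-1)$-uniformity when $n<3(d-1)$ (and vacuous in the AME regime $K=1$), so the enumerator computation appears unavoidable.
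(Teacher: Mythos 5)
The paper does not give an in-house argument here: its ``proof'' is a pointer to Rains (\cite{782103}, Theorem~2) and Huber--Grassl (\cite{Huber2020}, Theorem~5), and the weight-enumerator route you outline is indeed the approach taken in those references, so you have correctly identified the right machinery. Your preliminary reductions are also sound: given the code condition, $c(E)=\tfrac{1}{K}\mathrm{Tr}(PE)$, so purity is equivalent to $\mathrm{Tr}(PE)=0$ for $0<\mathrm{wt}(E)<d$, equivalently to all marginals $\rho_S$ with $|S|\le d-1$ being maximally mixed, equivalently to $A_1=\dots=A_{d-1}=0$; and your warning that the strong-subadditivity proof of the Singleton bound only yields maximal mixedness of marginals of size $n-2(d-1)$, which is strictly weaker than $(d-1)$-uniformity once $n<3(d-1)$, is a genuinely valuable observation that many sketches of this result get wrong.

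The gap is that the one step carrying all the content --- that saturation of the bound forces each of $A_1,\dots,A_{d-1}$ to vanish individually rather than merely constraining a linear combination --- is asserted as ``the program is tight at an extreme point'' but never executed, and you yourself flag it as the main obstacle. As stated, the claim is not self-evident: the MacWilliams identity plus $A_j=B_j$ for $j\le d-1$ and $0\le A_j\le B_j$ is a system with many free parameters, and one must actually show that at $K=D^{\,n-2(d-1)}$ the feasible region collapses so that the low-weight enumerators are forced to zero. The way the references close this is not by analyzing the aggregated linear program but by working subset-by-subset: for each $S$ with $|S|\le d-1$ the Knill--Laflamme conditions give $\mathrm{Tr}\bigl[(\mathrm{Tr}_{S}\,P)^2\bigr]=K\,\mathrm{Tr}\bigl[(\mathrm{Tr}_{S^c}P)^2\bigr]$ via the purification identity $\mathrm{Tr}[\rho_{SR}^2]=\mathrm{Tr}[\rho_{S^c}^2]$, and combining these relations across complementary subsets shows that the full weight enumerator of a Singleton-saturating code is uniquely determined by $(n,K,d,D)$ (the quantum analogue of the classical fact that MDS codes have a determined weight distribution); saturation then forces the Cauchy--Schwarz inequality $\mathrm{Tr}\bigl[(\mathrm{Tr}_{S^c}P)^2\bigr]\ge K^2D^{-|S|}$ to be tight for every $|S|\le d-1$, which is exactly $\rho_S=D^{-|S|}\mathbb{I}$, i.e.\ purity. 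Until you supply that tightness-propagation argument (or an explicit solution of the constrained enumerator system), the proposal is a correct plan rather than a proof.
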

\begin{proof}
    See \cite[{Theorem 2}]{782103} and \cite[{Theorem 5}]{Huber2020}.
\end{proof}
Codes that saturate the quantum Singleton bound are called \textit{maximum distance separable} (MDS). Thus, \Cref{thm:QMDS-pure} states that MDS codes are pure.

\subsection{Stabilizer codes}\label{sec:stabilizercodes} We now discuss a well-studied class of quantum error correcting codes known as stabilizer codes or linear codes. We introduce only the basic concepts needed to define a stabilizer code $\mathcal{Q}\subset(\CC^D)^{\otimes n}$ in the case where $D$ is prime. For a thorough treatment, see \cite{Nielsen_Chuang} for qubit codes and \cite{1715533} for the generalization to qudits.

For a prime $D$, let $\xi=e^{2\pi\im/D}$ be the $D$th root of unity and let $X,Z\in\text{End}(\CC^D)$ denote the \textit{generalized Pauli matrices} defined by
\begin{equation*}
X\ket{i}=\ket{(i+1) \text{ mod } D}\quad\text{and}\quad Z\ket{i}=\xi^i\ket{i}.
\end{equation*}
Note that $X$ and $Z$ depend on a chosen prime $D$ which will be clear from context. To give an example, when $D=3$, we have
\begin{equation}\label{eq:Pauli}
X = \begin{pmatrix} 0 & 0 & 1\\
1 & 0 & 0\\
0 & 1 & 0
\end{pmatrix}
\quad \text{and}\quad Z =\begin{pmatrix}
    1 & 0 & 0 \\
    0 & e^{2\pi \im/3} & 0\\
    0 & 0 & e^{4\pi \im/3}
\end{pmatrix}.
\end{equation}
The generalized Pauli matrices allow us to construct the \textit{Pauli group on $n$ qudits} $\mathcal{P}_n$. For $n=1$ qudit, the Pauli group $\mathcal{P}_1$ consists of all operators of the form $\xi^k U$, where $U\in\langle X,Z\rangle$ and $k\in\ZZ$. For general $n$, the Pauli group $\mathcal{P}_n$ consists of all $n$-fold tensor products of elements in $\mathcal{P}_1$.

If $\mathcal{S}$ is a subgroup of $\mathcal{P}_n$, define $V_\mathcal{S}$ to be the vector space consisting of all $\ket{\varphi}\in (\CC^D)^{\otimes n}$ such that $g\ket{\varphi}=\ket{\varphi}$ for all $g\in\mathcal{S}$. We say that $\mathcal{Q}$ is a \textit{stabilizer code} if it is equal to $V_\mathcal{S}$ for some $\mathcal{S}\leq \mathcal{P}_n$. In this case, the group $\mathcal{S}$ is the \textit{stabilizer} of $\mathcal{Q}$.

\subsection{Transversal gates}\label{sec:transversal} Let $\mathcal{Q}\subset (\CC^D)^{\otimes n}$ be a code. It is sometimes natural to consider not only operations from the Pauli group or LU group, but all SLOCC on the space $(\CC^D)^{\otimes n}$. This leads to the definition
\[
N(\mathcal{Q})=\{g\in\SL_D^{\otimes n}:g\mathcal{Q}=\mathcal{Q}\}.
\]
In some special cases, the group above is the normalizer of a Cartan subspace, which explains our choice of notation (more on this in \Cref{sec:GIT}). There is a natural representation 
\begin{equation}\label{eq:mu}
    \mu:N(\mathcal{Q})\to \GL(\mathcal{Q}),\quad\mu(g)\ket{\varphi}=g\ket{\varphi}
\end{equation}
for $g\in N(\mathcal{Q})$ and $\ket{\varphi}\in\mathcal{Q}$. If we fix an orthogonal basis $\{\ket{u_i}:1\leq i\leq K\}$ of $\mathcal{Q}$, then $\mu(g)$ is a $K\times K$ matrix with entries
$\mu(g)_{ij}=\bra{u_i}g\ket{u_j}$.

If $g\in N(\mathcal{Q})\cap \SU_D^{\otimes n}$, then we say that $\mu(g)$ is a \textit{transversal gate} of the code $\mathcal{Q}$. The subgroup $N(\mathcal{Q})\cap \SU_D^{\otimes n}$ gives us all LU operations that physically implement these gates. Transversal gates are of interest because they do not propagate local errors, enabling fault-tolerant operations on encoded qudits. Note that we only consider special unitary local operations $g$ in our definition of a transversal gate, ignoring the action of global phase changes.

\subsection{Uniform states} We now define and discuss some properties of a class of highly entangled states which generalize maximal entanglement in bipartite systems.

\begin{definition}
    A state $\ket{\varphi}\in (\CC^D)^{\otimes n}$ is \textit{$r$-uniform} if the reduction of $\ket{\varphi}\bra{\varphi}$ to any $r$-qudit subsystem is maximally mixed, that is, $\text{Tr}_S(\ket{\varphi}\bra{\varphi})$ is proportional to the identity for every subset $S\subset\{1,\dots,n\}$ of size $|S|= n-r$.
\end{definition}

Note the following facts about $r$-uniform states:
\begin{itemize}
    \item[1.] If $\ket{\varphi}$ is $r$-uniform, then $r\leq \lfloor \frac{n}{2}\rfloor$.
    \item[2.] If $\ket{\varphi}$ is $r$-uniform then $\ket{\varphi}$ is $r'$-uniform for every $1\leq r'\leq r$.
\end{itemize}
The first item follows from Schmidt decomposition with respect to a subsystem of size $r$ and its complement. The second item holds because a reduction of a maximally mixed state is again maximally mixed. We also have the following connection between $r$-uniform states and pure codes.

\begin{theorem}\label{thm:obs}
A subspace $\mathcal{Q}\subset (\CC^D)^{\otimes n}$ is a pure $((n,K,d))_D$ code with $d>1$ if and only if every state in $\mathcal{Q}$ is $(d-1)$-uniform.
\end{theorem}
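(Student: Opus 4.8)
The plan is to recast both the purity/distance condition and $(d-1)$-uniformity as the vanishing of expectation values $\bra{\varphi}E\ket{\varphi}$ of the error operators $E\in\mathcal{E}$, and then to move between the diagonal statement (one state) and the off-diagonal statement (a pair of basis vectors) by polarization.

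First I would reformulate uniformity. Because $\mathcal{B}$ is orthogonal and $\mathbb{I}\in\mathcal{B}$, every non-identity element of $\mathcal{B}$ is traceless, so each $E\in\mathcal{E}$ with $\text{wt}(E)\geq 1$ is traceless and supported on exactly $\text{wt}(E)$ parties. Fixing a subset $T$ with $|T|=r$ and letting $S$ be its complement, the operators $E\in\mathcal{E}$ supported inside $T$ with $1\leq \text{wt}(E)\leq r$ form a basis of the traceless operators on the $r$-qudit subsystem $T$. Since a density matrix is proportional to the identity exactly when it is orthogonal (in the trace pairing) to every traceless operator, and since $\text{Tr}\big(\text{Tr}_S(\ket{\varphi}\bra{\varphi})\,F\big)=\bra{\varphi}(F\otimes \mathbb{I})\ket{\varphi}$, I conclude that $\ket{\varphi}$ is $r$-uniform if and only if $\bra{\varphi}E\ket{\varphi}=0$ for every $E\in\mathcal{E}$ with $1\leq\text{wt}(E)\leq r$. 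I would apply this with $r=d-1$; note that the reformulated condition is homogeneous in $\ket{\varphi}$, so once it holds for every normalized state it holds for every vector of $\mathcal{Q}$.

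For the forward direction, suppose $\mathcal{Q}$ is a pure $((n,K,d))_D$ code with orthogonal basis $\{\ket{u_i}\}$. Writing an arbitrary $\ket{\varphi}\in\mathcal{Q}$ as $\sum_i a_i\ket{u_i}$ and using $\bra{u_i}E\ket{u_j}=c(E)\delta_{ij}$ together with purity ($c(E)=0$ for $0<\text{wt}(E)<d$), I get $\bra{\varphi}E\ket{\varphi}=c(E)\sum_i|a_i|^2=0$ for every such $E$, so by the reformulation $\ket{\varphi}$ is $(d-1)$-uniform. For the converse, suppose every vector of $\mathcal{Q}$ is $(d-1)$-uniform and fix any orthonormal basis $\{\ket{u_i}\}$. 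Fixing $E$ with $0<\text{wt}(E)<d$, consider the sesquilinear form $B(\varphi,\psi)=\bra{\varphi}E\ket{\psi}$ on $\mathcal{Q}$. The reformulation says its diagonal vanishes identically on $\mathcal{Q}$; since we work over $\CC$ and $\mathcal{Q}$ is a subspace, the polarization identity expresses $B(\varphi,\psi)$ as a linear combination of the four diagonal values $B(\varphi+\im^{k}\psi,\varphi+\im^{k}\psi)$ for $k=0,1,2,3$, all of which vanish, whence $B\equiv 0$ and $\bra{u_i}E\ket{u_j}=0$ for all $i,j$. Together with $\bra{u_i}\mathbb{I}\ket{u_j}=\delta_{ij}$ for the weight-$0$ operator, this yields $\bra{u_i}E\ket{u_j}=c(E)\delta_{ij}$ for all $E$ with $\text{wt}(E)<d$ and $c(E)=0$ when $0<\text{wt}(E)<d$, exhibiting $\mathcal{Q}$ as a pure $((n,\dim\mathcal{Q},d))_D$ code; and $d>1$ because $(d-1)$-uniformity presupposes $d-1\geq 1$.

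The step I expect to demand the most care is the uniformity reformulation: checking that the weight-$\leq r$ errors genuinely span all traceless observables on each $r$-qudit subsystem (which uses both orthogonality of $\mathcal{B}$ and $\mathbb{I}\in\mathcal{B}$), and being careful that ``maximally mixed'' is read as ``proportional to the identity,'' so that the expectation-value condition is homogeneous and may legitimately be applied to the unnormalized combinations $\varphi+\im^{k}\psi$ fed into the polarization identity. The polarization step is the crux of the converse, but it becomes routine once this dictionary is in place.
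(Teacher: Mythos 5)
Your argument is correct, but note that the paper does not actually prove this statement itself: its ``proof'' is a pointer to Observation~1 of Huber--Grassl \cite{Huber2020}, so you have supplied a self-contained argument where the paper defers to a reference. Your route is the standard one and matches the spirit of the cited result: you translate ``$\text{Tr}_S(\ket{\varphi}\bra{\varphi})\propto\mathbb{I}$'' into the vanishing of $\bra{\varphi}E\ket{\varphi}$ for all $E\in\mathcal{E}$ with $1\le\operatorname{wt}(E)\le r$, using that the non-identity elements of $\mathcal{B}$ are traceless (orthogonality to $\mathbb{I}\in\mathcal{B}$) and that the weight-$\ge 1$ products form a basis of the traceless operators on each $r$-qudit subsystem by a dimension count; the forward direction is then immediate from $\bra{u_i}E\ket{u_j}=c(E)\delta_{ij}$ with $c(E)=0$, and the converse follows because a sesquilinear form on a complex space with identically vanishing diagonal is zero (polarization). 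All the steps you flag as delicate do go through: the expectation-value condition is homogeneous of degree two in $\ket{\varphi}$, so it legitimately applies to the unnormalized vectors $\ket{\varphi}+\im^k\ket{\psi}$, and the weight-$0$ case $E=\mathbb{I}^{\otimes n}$ is handled separately by orthonormality of the chosen basis. The only cosmetic caveat is that the paper's definition of a code uses an \emph{orthogonal} (not necessarily orthonormal) basis, so in the forward direction the coefficient in $\bra{\varphi}E\ket{\varphi}=c(E)\sum_i|a_i|^2$ should strictly carry the common norm of the $\ket{u_i}$; this does not affect the conclusion since $c(E)=0$. What your explicit proof buys over the paper's citation is a transparent identification of exactly which hypotheses on $\mathcal{E}$ (orthogonality, $\mathbb{I}\in\mathcal{B}$) the equivalence rests on.
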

\begin{proof}
    See \cite[{Observation 1}]{Huber2020}.
\end{proof}

An $r$-uniform state $\ket{\varphi}\in (\CC^D)^{\otimes n}$ with $r= \lfloor \frac{n}{2}\rfloor$ is called \textit{absolutely maximally entangled} (AME). While AME states have many applications in QIT \cite{PhysRevA.86.052335,Helwig:2013qoq,PhysRevA.104.032601}, we are primarily concerned with their relationship to QECCs. \Cref{thm:obs} gives one aspect of this relationship, but we will see more in the next section. If $n$ is even, the AME state $\ket{\varphi}$ is also called a \textit{perfect tensor} \cite{Pastawski2015}.

\section{The correspondence between perfect tensors and MDS codes}\label{sec:correspondence}

\subsection{New codes from old ones} \Cref{thm:newfromold} below provides a method for generating new pure codes from old ones. This was first formulated by Rains for qubit codes \cite{Rains1996QuantumWE}. The more general version for qudits appears in a paper by Huber and Grassl \cite{Huber2020}. We are interested in a special case of this result, which we state in \Cref{cor:codeconstruction}.

\begin{theorem}\label{thm:newfromold}
        Let $\Pi_\mathcal{Q}\in\text{End}((\CC^D)^{\otimes n})$ be the orthogonal projection onto a pure code $\mathcal{Q}$ with parameters $((n,K,d))_D$, where $n,d\geq 2$. The operator
    $
    D\cdot\text{Tr}_{\{1\}}(\Pi_\mathcal{Q})
    $
    is the projection onto a pure code with parameters $((n-1,DK,d-1))_D$.
\end{theorem}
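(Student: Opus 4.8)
The plan is to verify the defining purity conditions for the claimed $((n-1,DK,d-1))_D$ code directly, working with a convenient orthonormal basis of the new code space and reducing everything to the known purity of $\mathcal{Q}$. Let me set up notation. Let $\{\ket{u_i}\}_{i=1}^K$ be an orthonormal basis of $\mathcal{Q}$, so that $\Pi_\mathcal{Q}=\sum_i\ket{u_i}\bra{u_i}$. Writing the first tensor factor explicitly, expand $\ket{u_i}=\sum_{a=0}^{D-1}\ket{a}\otimes\ket{u_i^a}$ where $\ket{u_i^a}\in(\CC^D)^{\otimes(n-1)}$. Then $\operatorname{Tr}_{\{1\}}(\Pi_\mathcal{Q})=\sum_{i}\sum_{a}\ket{u_i^a}\bra{u_i^a}$, and I would set $P=D\cdot\operatorname{Tr}_{\{1\}}(\Pi_\mathcal{Q})$. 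The first task is to show $P$ is an orthogonal projection, i.e.\ $P^2=P$ and $P^\dagger=P$; self-adjointness is immediate, while $P^2=P$ should follow from the purity of $\mathcal{Q}$ applied to weight-one errors of the form $E_1\otimes\mathbb{I}^{\otimes(n-1)}$, which force the Gram-type relations $\sum_a\braket{u_i^a|u_j^a}$ and the cross terms to behave like a projection. The rank of $P$ will then be $DK$ by a trace computation.

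Once $P$ is known to project onto a subspace $\mathcal{Q}'\subset(\CC^D)^{\otimes(n-1)}$ of dimension $DK$, the second and main task is to verify that $\mathcal{Q}'$ is pure of distance $d-1$. The cleanest route is via \Cref{thm:obs}: I would show that every state in $\mathcal{Q}'$ is $(d-2)$-uniform, which is equivalent to $\mathcal{Q}'$ being a pure $((n-1,DK,d-1))_D$ code. The key observation is that the space $\mathcal{Q}'$ is spanned by the vectors $\ket{u_i^a}$, and a reduction of $\ket{u_i^a}\bra{u_j^b}$ to an $(n-2)$-subsystem is controlled by a reduction of $\ket{u_i}\bra{u_j}$ to an $(n-1)$-subsystem (the original first qudit plus the traced-out ones). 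Since $\mathcal{Q}$ is pure of distance $d$, these $(n-1-(d-1))=(n-d)$-body reductions are maximally mixed on the relevant slice, and tracing out one more factor preserves this. Concretely, I would compute $\bra{v}E'\ket{v}$ for $\ket{v}\in\mathcal{Q}'$ and an error $E'$ on $n-1$ qudits of weight $<d-1$, and relate it to $\bra{u}(\mathbb{I}\otimes E')\ket{u}$ for a suitable $\ket{u}\in\mathcal{Q}$; because $\mathbb{I}\otimes E'$ has weight $<d-1<d$, purity of $\mathcal{Q}$ gives the required scalar.

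The step I expect to be the main obstacle is managing the bookkeeping of the partial trace over the first qudit while keeping track of which indices in $\mathcal{E}$ survive the reduction. The subtlety is that a generic element of $\mathcal{Q}'$ is a linear combination $\sum_{i,a}c_{ia}\ket{u_i^a}$ whose summands are not a priori orthogonal, so the uniformity computation must use the projection-derived relations among the $\ket{u_i^a}$ rather than treating them as an orthonormal family. I would therefore phrase the uniformity condition operator-theoretically: the reduction of $\ket{v}\bra{v}$ to an $(n-2)$-subsystem being proportional to the identity is equivalent to $\operatorname{Tr}_{S}(P/\!\dim)$-type statements, and I would deduce these from $\operatorname{Tr}_{S\cup\{1\}}(\Pi_\mathcal{Q})\propto\mathbb{I}$, which holds because $\mathcal{Q}$ being pure of distance $d$ means every state in it is $(d-1)$-uniform and hence its reductions to subsystems of size up to $d-1$ are maximally mixed. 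Since I may invoke \Cref{thm:obs} in both directions, the cleanest writeup packages the whole argument as: purity of $\mathcal{Q}\Rightarrow(d-1)$-uniformity of $\mathcal{Q}$'s states $\Rightarrow$ $(d-2)$-uniformity of $\mathcal{Q}'$'s states $\Rightarrow$ purity of $\mathcal{Q}'$, with the dimension count $DK$ handled separately by the trace of $P$.
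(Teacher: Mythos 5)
The paper does not actually prove this theorem: its ``proof'' consists of a citation to Rains and to Huber--Grassl plus a remark about the normalizing factor $D$, so there is no in-paper argument to compare against. Your proposal is a reconstruction of the standard argument from those references, and its overall architecture is sound; two places need to be firmed up. First, the ``Gram-type relations'': they are $\braket{u_i^a|u_j^b}=\bra{u_i}\bigl(\ket{a}\bra{b}\otimes\mathbb{I}^{\otimes(n-1)}\bigr)\ket{u_j}=\tfrac{1}{D}\delta_{ab}\delta_{ij}$, obtained by writing $\ket{a}\bra{b}=\tfrac{\delta_{ab}}{D}\mathbb{I}+(\text{traceless part})$ and noting the traceless part is a combination of weight-$1$ elements of $\mathcal{E}$, killed by purity since $d\ge 2$. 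Hence the $DK$ vectors $\sqrt{D}\ket{u_i^a}$ are \emph{orthonormal}, $P$ is the orthogonal projection onto their span of rank $DK$, and your worry about non-orthogonal summands evaporates. Second, and this is the one step your sketch gets materially wrong as stated: you propose to control $\bra{v}E'\ket{v}$ via $\bra{u}(\mathbb{I}\otimes E')\ket{u}$, ``because $\mathbb{I}\otimes E'$ has weight $<d-1<d$'' --- but that reasoning alone would let you conclude distance $d$, not $d-1$. The correct computation is $D\bra{u_i^a}E'\ket{u_j^b}=D\bra{u_i}(\ket{a}\bra{b}\otimes E')\ket{u_j}$, and the decomposition of $\ket{a}\bra{b}$ produces errors on the original $n$ qudits of weight $\mathrm{wt}(E')$ \emph{and} of weight $\mathrm{wt}(E')+1$; purity of $\mathcal{Q}$ disposes of both exactly when $\mathrm{wt}(E')+1<d$, i.e.\ $\mathrm{wt}(E')<d-1$. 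The off-diagonal operators $\ket{a}\bra{b}\otimes E'$ with $a\ne b$ are precisely what force the distance to drop by one, and they must appear explicitly. With this computation in hand, the detour through $(d-2)$-uniformity and \Cref{thm:obs} is unnecessary (and degenerate at $d=2$, where the target distance is $1$ and \Cref{thm:obs} requires distance $>1$); the direct verification of the purity condition for $\mathcal{Q}'$ in the orthonormal basis $\{\sqrt{D}\ket{u_i^a}\}$ is shorter. Finally, fix the bookkeeping slips (``$(n-2)$-subsystem'', ``$(n-d)$-body reductions''): the relevant statement is that the reduction of $\ket{u_i^a}\bra{u_j^b}$ onto a set $T\subset\{2,\dots,n\}$ with $|T|=d-2$ is the $(a,b)$ block of the reduction of $\ket{u_i}\bra{u_j}$ onto $\{1\}\cup T$, which has size $d-1$.
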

\begin{proof}
    See the proofs of \cite[{Theorem 19}]{Rains1996QuantumWE} and \cite[{Theorem 2}]{Huber2020}. Note that the scale factor $D$ is needed to make the partial trace a projection. In general, the trace of a projection must be equal to the dimension of its image.
\end{proof}

\begin{corollary}\label{cor:codeconstruction}
    If $\ket{\varphi}\in (\CC^D)^{\otimes n}$ is a normalized $r$-uniform state, then  $D\cdot\text{Tr}_{\{1\}}(\ket{\varphi}\bra{\varphi})$ is the projection onto a pure $((n-1,D,r))_D$ code.
\end{corollary}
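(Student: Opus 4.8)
The plan is to realize the single normalized state $\ket{\varphi}$ as a one-dimensional pure code and then feed it into \Cref{thm:newfromold} with $K=1$. First I would set $\mathcal{Q}=\CC\ket{\varphi}$, the one-dimensional subspace spanned by $\ket{\varphi}$, whose orthogonal projection is exactly $\Pi_{\mathcal{Q}}=\ket{\varphi}\bra{\varphi}$. Every nonzero vector in $\mathcal{Q}$ is a scalar multiple $c\ket{\varphi}$, and since scaling by $c$ multiplies each reduced density matrix by $|c|^2$, the property of being proportional to the identity is preserved; hence the $r$-uniformity of $\ket{\varphi}$ is inherited by every state in $\mathcal{Q}$. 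By \Cref{thm:obs}, applied with $d-1=r$, this shows that $\mathcal{Q}$ is a pure $((n,1,r+1))_D$ code. The hypothesis $d>1$ of that theorem holds because we assume $r\geq 1$.

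With $\mathcal{Q}$ identified as a pure code of parameters $((n,1,r+1))_D$, the second step is a direct invocation of \Cref{thm:newfromold}. Its hypotheses $n\geq 2$ and $d=r+1\geq 2$ are both satisfied (the latter again from $r\geq 1$), so the theorem asserts that $D\cdot\operatorname{Tr}_{\{1\}}(\Pi_{\mathcal{Q}})=D\cdot\operatorname{Tr}_{\{1\}}(\ket{\varphi}\bra{\varphi})$ is the projection onto a pure code. The only remaining work is to read off its parameters: substituting $K=1$ and $d=r+1$ into the output data $((n-1,DK,d-1))_D$ yields precisely $((n-1,D,r))_D$, which is the claimed conclusion.

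Since both ingredients are already established, I do not expect a genuine obstacle: the argument is a clean specialization of the general construction to the case of a single pure state, and the real content is the bookkeeping of parameters together with the verification of the boundary conditions on $d$ and $n$. The one point that requires care is the implicit assumption $r\geq 1$, which is needed both for \Cref{thm:obs} (to guarantee $d>1$) and for \Cref{thm:newfromold} (to guarantee $d\geq 2$); for $r=0$ the statement is vacuous, as every normalized state is trivially $0$-uniform and a ``distance-$0$'' code carries no meaning. One may also observe, as a consistency check, that the constraint $r\leq\lfloor n/2\rfloor$ coming from $r$-uniformity is exactly what keeps the intermediate code $((n,1,r+1))_D$ within the quantum Singleton bound of \Cref{thm:singleton}.
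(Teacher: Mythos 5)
Your proposal is correct and follows essentially the same route as the paper: identify $\operatorname{span}\{\ket{\varphi}\}$ as a pure $((n,1,r+1))_D$ code via \Cref{thm:obs} and then apply \Cref{thm:newfromold} with $K=1$, $d=r+1$. Your extra attention to the boundary conditions ($r\geq 1$, $n\geq 2$) is a welcome refinement but does not change the argument.
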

\begin{proof}
    $\ket{\varphi}\bra{\varphi}$ is the projection onto the subspace spanned by $\ket{\varphi}$, which, by \Cref{thm:obs}, is a pure $((n,1,r+1))_D$ code. So the result follows from \Cref{thm:newfromold}.
\end{proof}

Huber and Grassl show that the construction of \Cref{cor:codeconstruction} can be reversed when $\ket{\varphi}$ is a perfect tensor. Specifically, \Cref{thm:reverserains} holds. In \Cref{sec:lifting} we show that this gives a bijection of LU orbits.

\begin{theorem}[Huber-Grassl]\label{thm:reverserains}
    Let $\{\ket{\varphi_0},\dots,\ket{\varphi_{D-1}}\}$ be an orthogonal basis of a MDS code with parameters $((n-1,D,\frac{n}{2}))_D$, where $n$ is even. Then $\frac{1}{\sqrt{D}}\sum_i \ket{i}\ket{\varphi_i}\in (\CC^D)^{\otimes n}$ is a normalized AME state.
\end{theorem}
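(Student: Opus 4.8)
The plan is to verify directly that $\ket{\Phi}=\frac{1}{\sqrt D}\sum_i\ket i\ket{\varphi_i}$ is normalized and $\frac n2$-uniform. Since we may take the code basis $\{\ket{\varphi_i}\}$ to be orthonormal, normalization is immediate: $\braket{\Phi|\Phi}=\frac1D\sum_{i,j}\braket{i|j}\braket{\varphi_i|\varphi_j}=\frac1D\sum_i 1=1$. For uniformity I would first record the two facts about the code $\mathcal Q=\operatorname{span}\{\ket{\varphi_i}\}$ that do the heavy lifting. Because $\mathcal Q$ is MDS it is pure by \Cref{thm:QMDS-pure}, so the defining relation reads $\braket{\varphi_i|E|\varphi_j}=c(E)\delta_{ij}$ with $c(E)=0$ whenever $0<\text{wt}(E)<\frac n2$; equivalently, by \Cref{thm:obs}, every $\ket{\varphi_i}$ is $\left(\frac n2-1\right)$-uniform. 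Showing that $\ket\Phi$ is AME amounts to showing that for every balanced bipartition $S\sqcup\bar S$ of the $n$ qudits with $|S|=|\bar S|=\frac n2$, the reduction $\text{Tr}_S(\ket\Phi\bra\Phi)$ is maximally mixed. I would split the bipartitions into two types according to whether the ancillary qudit $1$ lies in the traced-out set $S$ or in the retained set $\bar S$.

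Next I would dispatch the bipartitions in which the ancilla is retained, say $1\in\bar S$, so that $S\subseteq\{2,\dots,n\}$ and $R:=\{2,\dots,n\}\setminus S$ has size $\frac n2-1$. Writing $\text{Tr}_S(\ket\Phi\bra\Phi)=\frac1D\sum_{i,j}\ket i\bra j\otimes\text{Tr}_S(\ket{\varphi_i}\bra{\varphi_j})$ and expanding each block $\text{Tr}_S(\ket{\varphi_i}\bra{\varphi_j})$ in the single-qudit error basis on $R$, the coefficient attached to an error $G$ supported on $R$ is proportional to $\braket{\varphi_j|G|\varphi_i}$. Every such $G$ has weight at most $\frac n2-1<\frac n2$, so the code relation forces this coefficient to equal $c(G)\delta_{ij}$ up to normalization. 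Hence the off-diagonal blocks ($i\ne j$) vanish by the distance condition, while each diagonal block collapses to $\frac{1}{D^{n/2-1}}\mathbb I_R$ by purity (this is exactly the $\left(\frac n2-1\right)$-uniformity of $\ket{\varphi_i}$). Assembling the blocks yields $\text{Tr}_S(\ket\Phi\bra\Phi)=\frac1{D^{n/2}}\mathbb I$, as desired.

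The remaining bipartitions—those with $1\in S$—are where I expect the real difficulty, and it is instructive to see why the direct computation stalls. Tracing out the ancilla first produces $\frac1D\Pi_{\mathcal Q}$, and continuing to trace out the remaining $\frac n2-1$ code qudits of $S$ forces one to evaluate $\text{Tr}(E'\Pi_{\mathcal Q})$ for errors $E'$ of weight exactly $\frac n2$, which sit precisely at the boundary of the code's distance and are not controlled by purity. Rather than confront these weight-$\frac n2$ terms (which would require a weight-enumerator argument), I would sidestep them using the Schmidt decomposition across the cut $S\sqcup\bar S$: for a pure state the two reductions $\text{Tr}_S(\ket\Phi\bra\Phi)$ and $\text{Tr}_{\bar S}(\ket\Phi\bra\Phi)$ share the same nonzero spectrum. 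When $1\in S$ the complementary set satisfies $\bar S\subseteq\{2,\dots,n\}$, so $\text{Tr}_{\bar S}(\ket\Phi\bra\Phi)$ is one of the ancilla-retained reductions already shown to equal $\frac1{D^{n/2}}\mathbb I$; it is full rank $D^{n/2}$ with all eigenvalues $D^{-n/2}$. The matching spectrum then forces $\text{Tr}_S(\ket\Phi\bra\Phi)$, which lives on a space of the same dimension $D^{n/2}$, to be $\frac1{D^{n/2}}\mathbb I$ as well.

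Having covered both types of bipartition, $\ket\Phi$ is $\frac n2$-uniform and therefore AME. To summarize the structure: the content is the direct block computation of the ancilla-retained reductions, powered by the purity and distance of the MDS code; the one genuine obstacle is the boundary case of errors of weight exactly $\frac n2$ appearing in the ancilla-traced-out reductions, and the decisive idea is to bypass it by transporting maximal mixedness across the cut via the equality of Schmidt spectra.
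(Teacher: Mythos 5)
Your proof is correct. Note, however, that the paper does not actually prove this statement --- it simply defers to the proof of Proposition~7 in the cited Huber--Grassl reference --- so there is no in-text argument to compare yours against; what you have written is a self-contained replacement for that citation, and it follows essentially the standard route. Both halves of your argument check out: for cuts retaining the ancilla, expanding $\mathrm{Tr}_S(\ket{\varphi_i}\bra{\varphi_j})$ in a local error basis on the complementary set $R$ of size $\tfrac n2-1$ only ever invokes errors of weight $<\tfrac n2$, so purity of the MDS code (\Cref{thm:QMDS-pure}) kills the off-diagonal blocks and flattens the diagonal ones; and your diagnosis of the other case is accurate --- the weight-exactly-$\tfrac n2$ terms are genuinely not controlled by the code conditions, and the equality of the nonzero spectra of $\mathrm{Tr}_S(\ket\Phi\bra\Phi)$ and $\mathrm{Tr}_{\bar S}(\ket\Phi\bra\Phi)$ for the pure state $\ket\Phi$, combined with the equal dimensions $D^{n/2}$ of the two sides of a balanced cut, is exactly the right way to transport maximal mixedness across the cut. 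The only cosmetic caveat is that the theorem says ``orthogonal basis'' while your normalization computation assumes an orthonormal one; you flag this, and it is consistent with the paper's conventions. The payoff of writing the argument out as you do is that it makes visible precisely which bipartitions are certified by the code's purity and which are obtained for free from the purity of the global state, which the bare citation in the paper obscures.
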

\begin{proof}
    See the proof of \cite[{Proposition 7}]{Huber2020}.
\end{proof}

\subsection{Correspondence of LU orbits}\label{sec:lifting}
Let $\mathcal{Q}$ be a $D$-dimensional subspace of $(\CC^D)^{\otimes (n-1)}$.
We construct a normalized state $\ket{\varphi_\mathcal{Q}}\in (\CC^D)^{\otimes n}$ as follows. Let $\Pi_\mathcal{Q}$ be the scaled projection onto $\mathcal{Q}$ defined by
\[
\Pi_\mathcal{Q}=\frac{1}{D}(\ket{\varphi_0}\bra{\varphi_0}+\dots+\ket{\varphi_{D-1}}\bra{\varphi_{D-1}}),
\]
where $\{\ket{\varphi_0},\dots,\ket{\varphi_{D-1}}\}$ is an orthogonal basis of $\mathcal{Q}$.
Regarding $\Pi_\mathcal{Q}$ as a mixed state, we purify to obtain $\ket{\varphi_\mathcal{Q}}=\frac{1}{\sqrt{D}}\sum_i \ket{i}\ket{\varphi_i}$. Although $\ket{\varphi_\mathcal{Q}}$ depends on the choice of orthogonal basis, its LU orbit is uniquely determined by $\mathcal{Q}$. We claim that the function
\begin{equation}\label{eq:function}
\text{LU orbit of $\mathcal{Q}$}\quad\mapsto\quad\text{LU orbit of $\ket{\varphi_{\mathcal{Q}}}$}
\end{equation}
whose domain is the collection of LU orbits of $D$-dimensional subspaces $\mathcal{Q}\subset(\CC^D)^{\otimes (n-1)}$ is well-defined. In other words, the following lemma holds:

\begin{lemma}
    $\mathcal{Q}$ and $\mathcal{Q'}$ are LU equivalent if and only if $\ket{\varphi_\mathcal{Q}}$ and $\ket{\varphi_\mathcal{Q'}}$ are LU equivalent.
\end{lemma}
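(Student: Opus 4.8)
The plan is to treat the two implications separately, after first disposing of the choice-of-basis ambiguity. If $\{\ket{\varphi_i}\}$ and $\{\ket{\psi_i}\}$ are two orthonormal bases of the same $\mathcal{Q}$, related by a unitary change of basis $\ket{\psi_i}=\sum_j V_{ji}\ket{\varphi_j}$, then a one-line computation matching coefficients shows that the two resulting purifications differ by the local unitary $V^{T}\otimes\mathbb{I}$ acting on the purifying system alone. Since $V^{T}\in\U_D$, the LU orbit of $\ket{\varphi_\mathcal{Q}}$ is genuinely independent of the basis, and throughout the argument I am free to build each purification from whatever orthonormal basis is most convenient. This observation also supplies the well-definedness asserted just before the lemma.

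For the forward implication, suppose $\mathcal{Q}'=U\mathcal{Q}$ with $U=u_2\otimes\dots\otimes u_n$ a local unitary on systems $2,\dots,n$. Then $\{U\ket{\varphi_i}\}$ is an orthonormal basis of $\mathcal{Q}'$, and building $\ket{\varphi_{\mathcal{Q}'}}$ from it gives immediately $\ket{\varphi_{\mathcal{Q}'}}=(\mathbb{I}\otimes U)\ket{\varphi_\mathcal{Q}}$, so the two states are LU equivalent.

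The reverse implication is the crux, and the idea is to recover $\mathcal{Q}$ from $\ket{\varphi_\mathcal{Q}}$ in an LU-covariant way. By construction the reduction to the last $n-1$ systems is $\text{Tr}_{\{1\}}(\ket{\varphi_\mathcal{Q}}\bra{\varphi_\mathcal{Q}})=\Pi_\mathcal{Q}=\frac{1}{D}\sum_i\ket{\varphi_i}\bra{\varphi_i}$, whose support equals $\mathcal{Q}$ precisely because the $\ket{\varphi_i}$ are orthonormal. Now assume $\ket{\varphi_{\mathcal{Q}'}}=g\ket{\varphi_\mathcal{Q}}$ for an LU operator, which necessarily factors as $g=g_1\otimes h$ with $g_1\in\U_D$ on the purifying system and $h=g_2\otimes\dots\otimes g_n$ on the remaining systems. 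The key point is that the partial trace over system $1$ is insensitive to $g_1$: for every operator $X$ one has $\text{Tr}_{\{1\}}((g_1\otimes\mathbb{I})X(g_1^\dagger\otimes\mathbb{I}))=\text{Tr}_{\{1\}}(X)$ by cyclicity of the trace on the first factor. Applying this with $X=\ket{\varphi_\mathcal{Q}}\bra{\varphi_\mathcal{Q}}$ gives
\[
\Pi_{\mathcal{Q}'}=\text{Tr}_{\{1\}}(\ket{\varphi_{\mathcal{Q}'}}\bra{\varphi_{\mathcal{Q}'}})=h\,\Pi_\mathcal{Q}\,h^\dagger ,
\]
and passing to supports yields $\mathcal{Q}'=h\mathcal{Q}$. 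Since $h$ is a local unitary on systems $2,\dots,n$, this is exactly the statement that $\mathcal{Q}$ and $\mathcal{Q}'$ are LU equivalent.

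I expect the only real subtlety to live in the reverse direction, and within it the step that must be handled carefully is the covariance of the subspace-recovery map $\ket{\varphi_\mathcal{Q}}\mapsto\operatorname{supp}(\text{Tr}_{\{1\}}\ket{\varphi_\mathcal{Q}}\bra{\varphi_\mathcal{Q}})$. This relies on two facts working together: that the LU group splits across the cut $\{1\}$ versus $\{2,\dots,n\}$, so no operation mixes the purifying system with the code systems, and that conjugating by a unitary on the traced-out factor leaves the partial trace unchanged. Everything else reduces to direct computation.
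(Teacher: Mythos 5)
Your proposal is correct and follows essentially the same route as the paper: the paper's one-line proof chains the equivalences $\mathcal{Q}\sim\mathcal{Q}'\iff\Pi_\mathcal{Q}\sim\Pi_{\mathcal{Q}'}\iff\ket{\varphi_\mathcal{Q}}\sim\ket{\varphi_{\mathcal{Q}'}}$, and your argument simply makes the two implicit steps explicit (basis transport plus the $V^{T}\otimes\mathbb{I}$ ambiguity for the forward direction, invariance of $\operatorname{Tr}_{\{1\}}$ under conjugation on the traced-out factor followed by passing to supports for the reverse). No gaps; your version also supplies the well-definedness of the map \eqref{eq:function} that the paper asserts separately.
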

\begin{proof}
    Observe that $\mathcal{Q}$ and $\mathcal{Q}'$ are equivalent if and only if $\Pi_\mathcal{Q}$ and $\Pi_{\mathcal{Q}'}$ are equivalent if and only if $\ket{\varphi_\mathcal{Q}}$ and $\ket{\varphi_\mathcal{Q'}}$ are equivalent.
\end{proof}

\begin{theorem}\label{thm:correspondence}
    Let $n\geq 2$ be even. There is a one-to-one correspondence between LU orbits of normalized AME states in $(\CC^D)^{\otimes n}$ and LU orbits of $((n-1,D,\frac{n}{2}))_D$ codes.
\end{theorem}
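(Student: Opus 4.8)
The plan is to show that the map \eqref{eq:function}, restricted to the LU orbits of $((n-1,D,\frac{n}{2}))_D$ codes, is the asserted bijection. First I would record two structural facts. A code with these parameters satisfies $\log_D D = 1 = (n-1)-2(\frac{n}{2}-1)$, so it saturates the quantum Singleton bound (\Cref{thm:singleton}) and is therefore a pure MDS code by \Cref{thm:QMDS-pure}; in particular it is a $D$-dimensional subspace of $(\CC^3)^{\otimes(n-1)}$, so the preceding lemma applies to it verbatim. On the other side, a normalized AME state in $(\CC^D)^{\otimes n}$ with $n$ even is by definition a normalized $\frac{n}{2}$-uniform state.

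Next I would check that the restricted map is well-defined with the claimed codomain and is injective. Well-definedness and injectivity are immediate from the lemma above, which says that $\mathcal{Q}$ and $\mathcal{Q}'$ are LU equivalent if and only if $\ket{\varphi_\mathcal{Q}}$ and $\ket{\varphi_{\mathcal{Q}'}}$ are. That the image lands among AME orbits is exactly \Cref{thm:reverserains}: for any orthogonal basis $\{\ket{\varphi_0},\dots,\ket{\varphi_{D-1}}\}$ of such a code, the purification $\ket{\varphi_\mathcal{Q}}=\frac{1}{\sqrt D}\sum_i \ket{i}\ket{\varphi_i}$ is a normalized AME state.

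The remaining and most delicate step is surjectivity. Given a normalized AME state $\ket{\varphi}\in(\CC^D)^{\otimes n}$, it is $\frac{n}{2}$-uniform, so \Cref{cor:codeconstruction} shows that $\Pi_\mathcal{Q}:=\operatorname{Tr}_{\{1\}}(\ket{\varphi}\bra{\varphi})$ equals $\frac{1}{D}$ times the projection onto a pure $((n-1,D,\frac{n}{2}))_D$ code $\mathcal{Q}$. I would then argue that $\ket{\varphi}$ and the associated purification $\ket{\varphi_\mathcal{Q}}$ are LU equivalent. Both are vectors in $\CC^D\otimes(\CC^D)^{\otimes(n-1)}$ whose partial trace over the first factor equals the same reduced operator $\Pi_\mathcal{Q}$, which has rank $D$; hence each is a purification of $\Pi_\mathcal{Q}$ using a purifying space of dimension exactly $D=\operatorname{rank}(\Pi_\mathcal{Q})$. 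By the essential uniqueness of minimal purifications there is a unitary $U$ on the first factor with $(U\otimes \mathbb I)\ket{\varphi_\mathcal{Q}}=\ket{\varphi}$, and since $U\otimes\mathbb I$ is an LU operation the two states lie in the same LU orbit. The main obstacle I anticipate is precisely this purification-uniqueness argument: one must confirm that the freedom relating two purifications of the same state is carried entirely by a unitary on the purifying (first) subsystem, so that it is genuinely local rather than an operation mixing subsystems. The fact that $\operatorname{Tr}_{\{2,\dots,n\}}(\ket{\varphi}\bra{\varphi})=\frac1D\mathbb I$ is full rank (because $\ket{\varphi}$ is in particular $1$-uniform) is what guarantees the minimal-purification hypothesis holds on the nose. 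Assembling these steps, the restricted map is a well-defined injection into the LU orbits of normalized AME states that is also surjective, hence a bijection.
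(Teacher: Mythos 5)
Your proposal is correct and follows essentially the same route as the paper: the forward map is the purification construction \eqref{eq:function}, landing in AME orbits by \Cref{thm:reverserains}, and the inverse is the partial-trace construction of \Cref{cor:codeconstruction}. Your surjectivity step, which invokes the essential uniqueness of minimal purifications (valid here since $\operatorname{Tr}_{\{1\}}(\ket{\varphi}\bra{\varphi})$ has rank exactly $D$), is just a more explicit justification of the paper's terse verification that $\alpha\circ\beta=\mathrm{id}$.
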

\begin{proof}
    Let $\mathcal{Q}$ be a $((n-1,D,\frac{n}{2}))_D$ code. Then, by \Cref{thm:reverserains}, $\ket{\varphi_\mathcal{Q}}$ is AME. This means that the restriction $\alpha$ of \eqref{eq:function} to LU orbits of $((n-1,D,\frac{n}{2}))_D$ codes maps into the collection of LU orbits of normalized AME states in $(\CC^D)^{\otimes n}$. 
    
    Next, we define a function $\beta$ from the collection of LU orbits of normalized AME states in $(\CC^D)^{\otimes n}$ to the collection of LU orbits of $((n-1,D,\frac{n}{2}))_D$ codes as follows. Given an orbit $\mathcal{O}$ in the domain of $\beta$, choose any representative $\ket{\varphi}\in \mathcal{O}$. By \Cref{cor:codeconstruction}, $D\cdot \text{Tr}_{\{1\}}(\ket{\varphi}\bra{\varphi})$ is the projection onto a pure $((n-1,D,\frac{n}{2}))_D$ code $\mathcal{Q}$. Define $\beta(\mathcal{O})$ to be the LU orbit of $\mathcal{Q}$.
    
    From the definitions of $\alpha$ and $\beta$, one checks that
    \begin{align*}
    \beta\circ\alpha(\mathcal{Q}) = \beta(\ket{\varphi_{\mathcal{Q}}})=\text{im }\text{Tr}_{\{1\}}(\ket{\varphi_{\mathcal{Q}}}\bra{\varphi_\mathcal{Q}})=\mathcal{Q},
    \end{align*}
    where we abuse notation by using orbit representatives instead of orbits. Similarly, letting $\ket{\varphi}=\frac{1}{\sqrt{D}}\sum_i \ket{i}\ket{\varphi_i}$, we have
    \begin{align*}
    \alpha\circ\beta(\ket{\varphi})&=\alpha(\text{im Tr}_{\{1\}}(\ket{\varphi}\bra{\varphi}))
        = \alpha(\text{im }\frac{1}{D}\sum_i\ket{\varphi_i}\bra{\varphi_i})
        =\ket{\varphi}.
    \end{align*}
     It follows that $\alpha$ and $\beta$ are bijections.
\end{proof}

\subsection{Transversal gates and local symmetries}\label{sec:gates&sym} Let $\ket{\varphi}\in (\CC^D)^{\otimes n}$ be an $r$-uniform state such that $\langle\varphi|\varphi\rangle=D$. There is a close relationship between the group
\[
S(\ket{\varphi})=\{g\in \GL_3^{\otimes 4}:g\ket{\varphi}=\ket{\varphi}\}
\]
of \textit{local symmetries} of $\ket{\varphi}$ and the transversal gates of the corresponding $((n-1,D,r))_D$ code $\mathcal{Q}$ coming from the construction of \Cref{cor:codeconstruction}. We aim to make this relationship precise in this section.

First, note that the code $\mathcal{Q}$ is has a natural orthogonal basis consisting of the elements
\begin{equation}\label{eq:basis}
\ket{\varphi_i}=\bra{i}_1\ket{\varphi}, \quad i=0,1,\dots,D-1.
\end{equation}
The notation used here refers to tensor contraction. That is, $\ket{\varphi_i}$ is the image of $\bra{i}$ under the linear map $V_1^*\to V_2\otimes \dots\otimes V_n$ corresponding to the tensor $\ket{\varphi}\in V_1\otimes V_2\otimes \dots\otimes V_n$, where each $V_k$ is an indexed copy of $\CC^D$. Thus, $\ket{\varphi}=\sum_i \ket{i}\ket{\varphi_i}$.
These $\ket{\varphi_i}$ span the image $\mathcal{Q}$ of the reduced density matrix $\text{Tr}_{\{1\}}(\ket{\varphi}\bra{\varphi})$, as we can see from
\[
\text{Tr}_{\{1\}}(\ket{\varphi}\bra{\varphi})=\text{Tr}_{\{1\}}\Big(\sum_{i,j=0}^{D-1} \ket{i}\bra{j}\otimes\ket{\varphi_i}\bra{\varphi_j}\Big)=\sum_{i=0}^{D-1} \ket{\varphi_i}\bra{\varphi_i}.
\]
Furthermore, since $\ket{\varphi}$ is $1$-uniform and $\langle\varphi|\varphi\rangle=D$,
\[
\text{Tr}_{{\{1\}}^C}(\ket{\varphi}\bra{\varphi})=\text{Tr}_{{\{1\}}^C}\Big(\sum_{i,j=0}^{D-1} \ket{i}\bra{j}\otimes\ket{\varphi_i}\bra{\varphi_j}\Big)=\langle\varphi_j|\varphi_i\rangle\ket{i}\bra{j}=\mathbb{I},
\]
where $\{1\}^C$ is the complement of $\{1\}$ in $\{1,2,\dots,n\}$. This shows that the basis $\{\ket{\varphi_i}\}$ is orthogonal.
\begin{theorem}\label{thm:localsym}
    Let $\ket{\varphi}\in(\CC^D)^{\otimes n}$ be an $r$-uniform state such that $\langle\varphi|\varphi\rangle=D$. Let $\mathcal{Q}$ be the subspace of $(\CC^D)^{\otimes (n-1)}$ spanned by the orthogonal basis vectors $\ket{\varphi_i}$ defined in \eqref{eq:basis}. Let $A_1,A_2,\dots,A_n\in \SL_D$ and $\lambda\in\CC\setminus \{0\}$. Then
    $\lambda(A_1\otimes A_2\otimes \dots\otimes A_n)\in S(\ket{\varphi})$ if and only if $A_2\otimes\dots\otimes A_n\in N(\mathcal{Q})$ and $\mu(A_2\otimes \dots\otimes A_n)=(\lambda A_1^\top)^{-1}.$ Here, as in \Cref{sec:transversal}, the representation $\mu:N(\mathcal{Q})\to\GL(\mathcal{Q})$ is defined by $\mu(g)_{ij}=\bra{\varphi_{i-1}}g\ket{\varphi_{j-1}}.$
\end{theorem}
\begin{proof}
    Let $V_i$ be a copy of $\CC^D$ for $i\in\{1,\dots,n\}$. As a tensor, $\ket{\varphi}\in V_1\otimes V_2\otimes \dots\otimes V_n$ corresponds to a linear map $V_1^*\to V_2\otimes \dots\otimes V_n$. Let $M$ be the $D^{n-1}\times D$ matrix associated to this linear map. We have
    \begin{align*}
        \lambda (A_1\otimes A_2\otimes\dots\otimes A_n)\in S(\ket{\varphi})&\iff\lambda (A_1\otimes A_2\otimes \dots\otimes A_n)\ket{\varphi}=\ket{\varphi}\\
        &\iff\lambda(A_2\otimes \dots \otimes A_n)MA_1^\top=M\\
        &\iff (A_2\otimes \dots\otimes A_n)M=M(\lambda A_1^\top)^{-1}.
    \end{align*}
    Let $\{f_1,\dots,f_D\}$ the standard basis of $V_1^*$ consisting of column vectors $f_j$. Then the above holds if and only if
    \begin{align*}
        (A_2\otimes \dots\otimes A_n)Mf_j=M(\lambda A_1^\top)^{-1}f_j
        = \lambda^{-1}M(\sum_{i=1}^D a_{ij}f_i)
        =\lambda^{-1}(\sum_{i=1}^D a_{ij}Mf_i)
    \end{align*}
    for every $f_j$, where the constants $a_{ij}$ are the entries of the matrix $(A_1^\top)^{-1}=(a_{ij})$. Since $Mf_j=\ket{\varphi_{j-1}}$, this is equivalent to $A_1\otimes\dots\otimes A_n\in N(\mathcal{Q})$ and $\mu(A_2\otimes\dots\otimes A_n)=(\lambda A_1^\top)^{-1}$.
\end{proof}
\Cref{thm:localsym} shows that $S(\ket{\varphi})$ and $N(\mathcal{Q})$ are in a certain sense the same object, as one determines the other. Furthermore, $S(\ket{\varphi})\leq \U_D^{\otimes n}$ if and only if $N(\mathcal{Q})\leq \SU_D^{\otimes {n-1}}$ and in this case there is a direct relationship between the local symmetry group of $\ket{\varphi}$ and the transversal gates of $\mathcal{Q}$. We shall see in \Cref{sec:mainresults} that this is the case when $\ket{\varphi}$ is a 4-qutrit AME state. In general, for any $r$-uniform state $\ket{\varphi}$, $S(\ket{\varphi})\leq \U_D^{\otimes n}$ whenever $S(\ket{\varphi})$ is finite \cite[Proposition 6]{Gour_2011}.

\subsection{Connections to prior literature}
Specializing to the case $n=4$, \Cref{thm:correspondence} tells us that there is a one-to-one correspondence between LU orbits of normalized AME states in $(\CC^D)^{\otimes 4}$ and LU orbits of $((3,D,2))_D$ codes. For $D=2$, this corresponds to the known facts that there does not exist a 4-qubit AME state  \cite{HIGUCHI2000213} and there does not exist a $((3,2,2))_2$ code; the latter is known by the linear programming method from the quantum MacWilliams identity \cite{Shor1997} (see \cite[\S~7.4.4]{gottesman2024qeccbook} for an explicit calculation). For $D\geq 3$, we have the following result of Rather et al.:

\begin{theorem}[Rather-Ramadas-Kodiyalam-Lakshminarayan, \cite{PhysRevA.108.032412}]\label{thm:Rather}
    There exists a unique normalized AME state in $(\CC^3)^{\otimes 4}$ up to LU equivalence. If $D>3$, then there are infinitely many LU orbits of normalized AME states in $(\CC^D)^{\otimes 4}$. 
\end{theorem}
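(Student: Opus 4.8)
The plan is to reformulate AME states of four qudits as perfect tensors and, in turn, as \emph{$2$-unitary matrices}, so that classifying LU orbits becomes a question about local-unitary equivalence of a concrete matrix variety. Writing a normalized state as $\ket{\Phi}=\sum_{ijkl}T_{ijkl}\ket{ijkl}$, the $2$-uniformity conditions say exactly that each of the three bipartite reshapings of $T$ into a $D^2\times D^2$ array (corresponding to the complementary pairs $\{1,2\}\mid\{3,4\}$, $\{1,3\}\mid\{2,4\}$, $\{1,4\}\mid\{2,3\}$) is proportional to a unitary matrix; matrices $U$ with this property are the $2$-unitaries. Fixing the bipartition $\{1,2\}\mid\{3,4\}$ identifies $\ket{\Phi}$ with a $U\in\U(D^2)$ on $\CC^D\otimes\CC^D$, and under this dictionary two normalized AME states are LU equivalent iff $U\mapsto (u_1\otimes u_2)\,U\,(u_3\otimes u_4)^{\dagger}$ for some $u_i\in\U(D)$, up to a global phase. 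So it suffices to classify $2$-unitaries modulo this local action. One could instead phrase everything in terms of $((3,D,2))_D$ codes via \Cref{thm:correspondence}, but the $2$-unitary picture is the most direct route to both halves of the statement.

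For $D=3$ I would prove rigidity by using the local gauge to reduce $U$ to a canonical form and then solving the resulting equations. Concretely, employ $u_1,u_3$ and a phase to normalize the first block row and column of $U$, then use the remaining $u_2,u_4$ to diagonalize the reduced operators that $2$-uniformity forces to be scalar; the unitarity relations on $U$ together with those on its realignment then collapse to a finite polynomial system in the residual entries. The goal is to show that this system, modulo the leftover gauge, has a single solution, which one checks coincides with the classical state $\frac{1}{3}\sum_{ij}\ket{i}\ket{j}\ket{i+j}\ket{i+2j}$ built from the essentially unique pair of mutually orthogonal Latin squares of order $3$. I expect this to be the main obstacle: ruling out a positive-dimensional family of solutions is a genuine rigidity statement, and—as in the original argument of Rather et al.—it is likely to require a computer-assisted certificate (Gr\"obner-basis elimination, or an interval/numerical verification) that the solution variety is zero-dimensional and reduces to one orbit.

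For $D>3$ the opposite phenomenon holds, and I would exhibit infinitely many orbits by an explicit continuous deformation. Starting from a perfect tensor available for the given $D$ (the classical one $\frac{1}{D}\sum_{ij}\ket{i}\ket{j}\ket{i+j}\ket{i+cj}$ from a pair of MOLS when these exist, $D\neq 6$, or the known AME$(4,6)$ state otherwise), decorate it with phases, $T_{ij}\mapsto e^{\im\,\phi_{ij}}T_{ij}$. The AME property is preserved precisely when the pattern $\phi$ satisfies a finite collection of balanced-sum constraints, whereas the LU gauge can only absorb phases that split as $\alpha_i+\beta_j+\gamma_{i+j}+\delta_{i+cj}$ across the four factors; the quotient of admissible patterns by these factorizable gauge phases is positive-dimensional once $D\geq4$ (for $D=3$ the gauge exhausts all admissible patterns, consistent with the rigidity above), and verifying this reduces to a linear count of the constraints against the gauge. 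To certify that distinct parameter values give inequivalent states I would evaluate a continuous higher-degree polynomial LU invariant—one not already pinned down by the maximally mixed reductions—and show it is non-constant along the family. An alternative, less explicit route is a dimension count: linearize the $2$-unitarity equations at a known perfect tensor to bound the dimension of the $2$-unitary variety from below and compare with $\dim\U(D)^{\times 4}=4D^{2}$, concluding that the orbit space is positive-dimensional, hence uncountable, whenever the variety outgrows the maximal orbit. The delicate point in this variant is that the defining equations may be redundant, so computing the true tangent-space dimension is itself nontrivial; for that reason I would favor the explicit family, where the continuous invariant settles inequivalence directly.
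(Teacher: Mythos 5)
First, note that the paper does not prove this statement at all: it is imported verbatim from Rather et al.\ \cite{PhysRevA.108.032412} and used as a black box (its only role here is to feed \Cref{cor:Rather} via \Cref{thm:correspondence}). So the relevant question is whether your outline would itself constitute a proof, and as written it does not. For $D=3$, the entire content of the uniqueness claim is the rigidity step you explicitly defer: after gauge-fixing, you must show the resulting polynomial system has a single solution orbit, and you acknowledge this ``is likely to require a computer-assisted certificate'' without producing one. That certificate \emph{is} the theorem --- the original argument is a long, delicate classification of $2$-unitaries of order $9$ up to local equivalence --- so the $D=3$ half remains unproved in your write-up.

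For $D>3$ there are two concrete gaps. First, your dimension count compares the $D^2$-dimensional torus of phase decorations $T_{ij}\mapsto e^{\im\phi_{ij}}T_{ij}$ only against the \emph{diagonal} gauge $\alpha_i+\beta_j+\gamma_{i+j}+\delta_{i+cj}$ (dimension roughly $4D-3$, which indeed loses to $D^2$ exactly when $(D-1)(D-3)>0$ --- a nice observation, and consistent with rigidity at $D=3$). But LU equivalence is implemented by arbitrary elements of $\U(D)^{\times 4}$, not just diagonal ones, so two phase patterns not related by a factorizable shift could still be LU equivalent via non-diagonal local unitaries. You correctly identify the fix --- exhibit a continuous LU invariant that is non-constant along the family --- but you do not construct or evaluate one, and the standard low-degree invariants (purities of reductions, the $2$-unitarity traces) are all frozen on the AME variety, so finding a usable invariant is a genuine task, not a formality. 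Second, the case $D=6$ is not covered by your construction: there is no pair of MOLS of order $6$, so the monomial support structure on which both your ``balanced-sum'' admissibility analysis and your gauge count rely is absent, and deforming the known AME$(4,6)$ state (which is not of permutation form) requires a different argument --- indeed that case was the hard part of the cited work. In short, your proposal correctly reproduces the framing of the original proof ($2$-unitaries, local gauge reduction, explicit families plus invariants), but every step that carries the actual logical weight is left as a declared intention.
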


\begin{corollary}\label{cor:Rather}
    The code $\mathcal{C}$ is the unique $((3,3,2))_3$ code up to LU equivalence.  If $D>3$, then there are infinitely many LU orbits of $((3,D,2))_D$ codes.
\end{corollary}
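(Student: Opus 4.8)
The plan is to read off both claims directly from the bijection of \Cref{thm:correspondence} specialized to $n=4$, feeding in the count of LU orbits of AME states supplied by \Cref{thm:Rather}. Setting $n=4$, the correspondence gives a one-to-one map between LU orbits of normalized AME states in $(\CC^D)^{\otimes 4}$ and LU orbits of $((3,D,2))_D$ codes. Before invoking it I would first confirm that the correspondence genuinely captures \emph{all} $((3,D,2))_D$ codes and not merely a distinguished subclass: a $((3,D,2))_D$ code has $\log_D K = 1 = 3 - 2(2-1)$, so it saturates the quantum Singleton bound of \Cref{thm:singleton} and is therefore MDS. By \Cref{thm:QMDS-pure} every such code is automatically pure, so the domain of the correspondence is precisely the set of all LU orbits of $((3,D,2))_D$ codes.

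For $D=3$, \Cref{thm:Rather} asserts that there is a unique LU orbit of normalized AME states in $(\CC^3)^{\otimes 4}$. Transporting this across the bijection shows that there is exactly one LU orbit of $((3,3,2))_3$ codes. It then remains only to check that the distinguished code $\mathcal{C}$ of \Cref{sec:332code} is a genuine representative of this orbit, i.e.\ that $\mathcal{C}$ is itself a $((3,3,2))_3$ code; granting that (it is exhibited explicitly in \Cref{sec:332code}), uniqueness forces every $((3,3,2))_3$ code to be LU equivalent to $\mathcal{C}$.

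For $D>3$, \Cref{thm:Rather} gives infinitely many LU orbits of normalized AME states in $(\CC^D)^{\otimes 4}$, and the bijection immediately converts this into infinitely many LU orbits of $((3,D,2))_D$ codes.

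The argument is essentially formal once the correspondence is in hand, so I do not anticipate a serious obstacle. The only point demanding care is the purity check in the first step: the bijection is built (via \Cref{cor:codeconstruction} and \Cref{thm:reverserains}) out of constructions that presuppose pure, MDS codes, so one must verify that no impure $((3,D,2))_D$ code is being silently excluded. The observation that these parameters saturate the Singleton bound, together with \Cref{thm:QMDS-pure}, closes this gap cleanly.
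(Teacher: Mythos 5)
Your proposal is correct and follows exactly the route of the paper, which proves the corollary by specializing \Cref{thm:correspondence} to $n=4$ and invoking \Cref{thm:Rather}. The extra verification that every $((3,D,2))_D$ code is MDS and hence pure (so none are excluded from the correspondence) is a worthwhile elaboration of a point the paper leaves implicit, but it does not change the argument.
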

\begin{proof}
    This follows from \Cref{thm:correspondence,thm:Rather}.
\end{proof}

In this paper, we focus on the exceptional case $D=3$, where the AME state in $(\CC^D)^{\otimes 4}$ and the pure $((3,D,2))_D$ code exist and are unique.

Let us briefly mention what is known about $6$-qudit AME states. Rains showed that the normalized AME state of 6 qubits, the $((5,2,3))_2$ code, and the $((4,4,2))_2$ code are all unique up to LU equivalence \cite{746807}. In recent work, Ramadas and Lakshminarayan show that there exist infinitely many LU orbits of normalized AME states in $(\CC^D)^{\otimes 6}$ when $D\neq 2,3,6$ \cite{Ramadas_2025}. This implies that there exist infinitely many LU orbits of $((5,D,3))_D$ codes when $D\neq 2,3,6$. The cardinality of the set of LU orbits of normalized AME states in $(\CC^D)^{\otimes n}$ for the remaining cases $D=3,6$ is not known, although such AME states exist \cite{helwig2013absolutelymaximallyentangledqudit}.

\subsection{The 4-qutrit AME state and its associated code}\label{sec:332code}

Although the normalized 4-qutrit AME state is unique up to the action of the LU group, there are many known ways to construct a representative in its orbit \cites{helwig2013absolutelymaximallyentangledqudit,PhysRevA.72.012314,PhysRevA.92.032316,PhysRevA.91.012332,PhysRevResearch.3.043034}. We make use of a particularly simple version of the state that comes from a pair of orthogonal Latin squares (see \cites{PhysRevA.92.032316,PhysRevLett.128.080507} for an explanation and more details of the connection between orthogonal Latin squares and AME states). Specifically, the following is a 4-qutrit AME state:
\[
\ket{\Phi}=\frac{1}{\sqrt{3}}(\ket{0000}+ \ket{0111}+\ket{0222}+
\ket{1012}+\ket{1120}+\ket{1201}+
\ket{2021}+\ket{2102}+\ket{2210}).
\]
The state $\ket{\Phi}$ is not normalized but scaled so that the states $\ket{s_i}=\bra{i-1}_1\ket{\Phi}$ for $i\in\{1,2,3\}$ are normalized.
Let $\mathcal{C}$ denote the $((3,3,2))_3$ code arising from $\ket{\Phi}$ via the construction of \Cref{cor:codeconstruction}. As explained in \Cref{sec:gates&sym}, the vectors $\ket{s_i}$ form an orthogonal basis of $\mathcal{C}$. Written out, they are
\begin{align*}
\ket{s_1}&=\frac{1}{\sqrt{3}}(\ket{000}+\ket{111}+\ket{222}), \\
\ket{s_2}&=\frac{1}{\sqrt{3}}(\ket{012}+\ket{120}+\ket{201}), \\
\ket{s_3}&=\frac{1}{\sqrt{3}}(\ket{021}+\ket{102}+\ket{210}).
\end{align*}
One can check that $\mathcal{C}$ is the subspace consisting of states fixed by the group of operators generated by $X\otimes X\otimes X$ and $Z\otimes Z\otimes Z$, where $X$ and $Z$ are the $3\times 3$ generalized Pauli matrices of \eqref{eq:Pauli}. Thus, $\mathcal{C}$ is a stabilizer code. By a result of Hebenstreit et al. \cite{Hebenstreit}, there are no elements in $\SL_3^{\otimes 3}$ outside of the Pauli group $\mathcal{P}_3$ that also fix every point in $\mathcal{C}$.

\begin{theorem}[Hebenstreit-Spee-Kraus]\label{thm:kernel} Let $\mathcal{S}$ be the subgroup of $\SL_3^{\otimes 3}$ generated by $X^{\otimes 3}$ and $Z^{\otimes 3}$. Then
    \[\mathcal{S}=\{g\in \SL_3^{\otimes 3}:g\ket{\varphi}=\ket{\varphi},\:\forall\ket{\varphi}\in\mathcal{C}\}.\]
\end{theorem}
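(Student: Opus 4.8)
The plan is to prove both inclusions directly, bypassing the general machinery. For ``$\subseteq$'' I would verify that $X^{\otimes 3}$ and $Z^{\otimes 3}$ fix each of $\ket{s_1},\ket{s_2},\ket{s_3}$: the operator $X^{\otimes 3}$ cyclically shifts all three registers and hence merely permutes the three computational kets inside each $\ket{s_i}$, while $Z^{\otimes 3}$ acts by the phase $\xi^{d}$ where $d$ is the digit-sum, and every ket appearing in some $\ket{s_i}$ has $d\equiv 0\pmod 3$. Since $XZ=\xi^{-1}ZX$ gives $X^{\otimes 3}Z^{\otimes 3}=\xi^{-3}Z^{\otimes 3}X^{\otimes 3}=Z^{\otimes 3}X^{\otimes 3}$, the group $\mathcal{S}$ is abelian of order $9$. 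It then suffices to show the reverse inclusion, i.e. that the pointwise stabilizer $T:=\{g\in\SL_3^{\otimes 3}: g\ket{\varphi}=\ket{\varphi}\ \forall\ket{\varphi}\in\mathcal{C}\}$ has order at most $9$; since $g$ is linear, membership in $T$ is exactly the three conditions $g\ket{s_i}=\ket{s_i}$.

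Write a general element as $g=A\otimes B\otimes C$ with $A,B,C\in\SL_3$. The crux is a rigidity coming from $\ket{s_1}=\frac{1}{\sqrt 3}\sum_j\ket{jjj}$: comparing coefficients of $\ket{ikl}$ in $g\ket{s_1}=\ket{s_1}$ gives $\sum_j A_{ij}B_{kj}C_{lj}=[\,i=k=l\,]$, and for $i\neq k$ the vanishing for every $l$ says $C$ kills the vector $(A_{ij}B_{kj})_j$, so by invertibility $A_{ij}B_{kj}=0$ for all $j$ (with the two analogous statements for the other pairs). Because no column of an invertible matrix vanishes, in each column the supports of $A,B,C$ reduce to a single common row; hence $A,B,C$ are monomial matrices sharing one permutation $\sigma\in S_3$. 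Feeding this into $g\ket{s_2}=\ket{s_2}$ forces each triple $(\sigma(m),\sigma(m{+}1),\sigma(m{+}2))$ to be a cyclic rotation of $(0,1,2)$, so $\sigma$ is itself a cyclic rotation $j\mapsto j+t$. As $(X^{\otimes 3})^{t}\in\mathcal{S}\subseteq T$ accounts precisely for this permutation, multiplying by its inverse reduces us to the case where $A=\diag(a_j)$, $B=\diag(b_j)$, $C=\diag(c_j)$ are diagonal.

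In the diagonal case the three invariance conditions become the multiplicative system $a_jb_jc_j=a_jb_{j+1}c_{j+2}=a_jb_{j+2}c_{j+1}=1$ for all $j$, together with $\det A=\det B=\det C=1$. A short elimination (solve for $c_j$, then compare the relations pairwise) yields $b_j=\lambda a_j$ for a constant $\lambda$ and $(a_0,a_1,a_2)$ proportional to $(1,s,s^2)$ with $s^3=1$; the determinant conditions then pin $a_0,\lambda\in\mu_3$. Thus the diagonal solutions form a family parametrized by $(s,a_0,\lambda)\in\mu_3^{3}$, i.e. $27$ triples $(A,B,C)$. I expect the one genuinely delicate point to be the final count: one must use that $\SL_3^{\otimes 3}=\SL_3^{\times 3}/(\mu_3\times\mu_3)$, since the scalar triples $(\zeta_1 I,\zeta_2 I,\zeta_3 I)$ with $\zeta_i\in\mu_3$ and $\zeta_1\zeta_2\zeta_3=1$ act trivially on $(\CC^3)^{\otimes 3}$. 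This kernel acts simply transitively on the parameters $(a_0,\lambda)$ while fixing $s$, so modulo the kernel only $s\in\mu_3$ survives, giving exactly $3$ diagonal elements of $T$ — precisely $\langle Z^{\otimes 3}\rangle$. Combined with the $3$ choices of $t$ this gives $|T|=9$, whence $T=\mathcal{S}$. The main obstacle is therefore not a hard estimate but the bookkeeping: extracting the monomial rigidity from $\ket{s_1}$ cleanly, and then performing the final count in the quotient group rather than in $\SL_3^{\times 3}$, where the naive tally would be off by the factor $|\mu_3\times\mu_3|=9$.
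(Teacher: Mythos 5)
Your proposal is correct, but it takes a genuinely different route from the paper. The paper's proof is a two-line reduction to the cited result of Hebenstreit et al.: their Lemma~2 identifies the stabilizer of a single \emph{generic} state $\ket{\psi}\in\mathcal{C}$ with $\mathcal{S}$, and the paper then passes from one generic point to all of $\mathcal{C}$ by linearity and continuity. You instead give a self-contained computation of the pointwise stabilizer of the whole code, never invoking the generic-point result. The trade-off: the paper's argument is short but outsources the real work to a lengthy external computation; yours is longer on the page but elementary and complete, and it exploits the fact that fixing all three $\ket{s_i}$ is a far more rigid condition than fixing one generic vector --- the coefficient comparison on $\ket{s_1}$ alone forces $A,B,C$ to be monomial with a common permutation, which is exactly the kind of leverage unavailable in the single-state problem. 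I checked the details: the $\ket{s_1}$ rigidity argument is sound (invertibility of $C$ kills the off-diagonal sums, and nonvanishing of columns propagates the common support); the $\ket{s_2}$ condition correctly forces $\sigma(m+1)=\sigma(m)+1$ because the monomial coefficients are nonzero; the diagonal elimination does yield $b_j=\lambda a_j$, $a_j\propto s^j$ with $s^3=1$, and $a_0,\lambda\in\mu_3$; and your flagged subtlety about counting in $\SL_3^{\otimes 3}$ rather than $\SL_3^{\times 3}$ is handled correctly --- the $27$ diagonal triples all collapse to $\diag(1,s,s^2)^{\otimes 3}=(Z^{\otimes 3})^k$, giving exactly the $3$ elements of $\langle Z^{\otimes 3}\rangle$ and hence $|T|=9=|\mathcal{S}|$. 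One cosmetic remark: it is slightly cleaner to observe directly that each diagonal solution equals $(Z^{\otimes 3})^k$ as an operator than to argue via simple transitivity of the kernel on $(a_0,\lambda)$, though both are valid. Note also that your argument does not recover the stronger statement of Hebenstreit et al. (stabilizer of a single generic point), only what the theorem actually asserts.
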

\begin{proof}
    We use \cite[Lemma 2]{Hebenstreit}, which states: if $\ket{\psi}\in\mathcal{C}$ is generic, that is, $\ket{\psi}$ is contained in the complement of a proper complex subvariety of $\mathcal{C}$, then 
    \[
    \mathcal{S}=\{g\in\SL_3^{\otimes 3}:g\ket{\psi}=\ket{\psi}\}.
    \]
    If $g\in\SL_3^{\otimes 3}$ such that $g\ket{\varphi}=\ket{\varphi}$ for all $\ket{\varphi}\in \mathcal{C}$, then $g\ket{\psi}=\ket{\psi}$ for generic $\ket{\psi}\in \mathcal{C}$. Hence $g\in \mathcal{S}$. Conversely, suppose that $g\in \mathcal{S}$. Then $(g-\mathbb{I})\ket{\psi}=0$ for generic $\ket{\psi}\in\mathcal{C}$. By continuity, $(g-\mathbb{I})\ket{\varphi}=0$ for all $\ket{\varphi}\in\mathcal{C}$.
\end{proof}

\section{Geometric invariant theory and applications}\label{sec:GIT}

\subsection{Critical points} In this section, we introduce critical points and the Kempf-Ness theorem. We state the Kempf-Ness theorem as it appears in \cite{Gour_2011}; this version is often enough for applications in QIT (e.g. \cites{4qubit,Gour_2011,10.1063/1.3511477,Burchardt}). For a more general version, one may refer to Wallach's book on GIT \cite{WallachGIT}. Define $\mathcal{H}$, $G$, and $K$ as in \Cref{sec:somegroups}.

\begin{definition}\label{def:criticalpoint}
    A state $\ket{\varphi}\in \mathcal{H}$ is \textit{critical} if either of the following equivalent conditions is met:
    \begin{itemize}
        \item[1.] $\bra{\varphi}E\ket{\varphi}=0$ for every $E$ in the Lie algebra of $G$.
        \item[2.] The reduction of $\ket{\varphi}\bra{\varphi}$ to any single-qudit subsystem is maximally mixed.
    \end{itemize}
    The equivalence of these two conditions is given by \cite[{Theorem 3}]{Gour_2011}.
\end{definition}

The first condition of \Cref{def:criticalpoint} is the standard definition of a critical point, while the second condition is more meaningful from the perspective of QIT. Note that if $d_1=\dots =d_n$, then the second condition states that $\ket{\varphi}$ is $1$-uniform. Thus, all $r$-uniform states are critical. Critical points are also called locally maximally entangled states in the literature \cites{Bryan2019locallymaximally,Slowik2021}. The existence of critical points and the dimension of the set of all critical points have been determined in all multipartite systems \cite{Bryan2018}; both depend on the local dimensions $(d_1,\dots,d_n)$.

\begin{theorem}[The Kempf-Ness theorem]\label{thm:Kempf-Ness}
Let $\ket{\varphi}\in \mathcal{H}$. The following are true:
\begin{itemize}
    \item[1.] $\ket{\varphi}$ is critical if and only if $\bra{\varphi}g^\dagger g \ket{\varphi}\geq \braket{\varphi|\varphi}$ for all $g\in G$.
    \item[2.] If $\ket{\varphi}$ is critical and $g\in G$ such that $\bra{\varphi}g^\dagger g \ket{\varphi}=\braket{\varphi|\varphi}$, then there exists $h\in K$ such that $g\ket{\varphi}=h\ket{\varphi}$.
    \item[3.] The $G$-orbit of $\ket{\varphi}$ is closed if and only if it contains a critical point.
\end{itemize}
\end{theorem}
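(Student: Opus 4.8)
The plan is to reduce all three statements to the behaviour of the \emph{Kempf--Ness function} $t\mapsto \log\lVert g_t\ket{\varphi}\rVert^2$ along the geodesics of $G$ through the identity. The essential structural input is the polar (Cartan) decomposition $G=K\exp(\mathfrak p)$, where $\mathfrak p=i\,\mathfrak k$ is the real subspace of $\mathfrak g$ consisting of the traceless Hermitian operators (Hermitian on each tensor factor), so that every $g\in G$ factors as $g=k\exp(E)$ with $k\in K$ and $E\in\mathfrak p$. Since $K$ acts by isometries, $\lVert g\ket{\varphi}\rVert=\lVert\exp(E)\ket{\varphi}\rVert$, so the norm along the orbit is governed entirely by the Hermitian directions, and the vanishing condition $\bra{\varphi}E\ket{\varphi}=0$ on $\mathfrak p$ matches condition~2 of \Cref{def:criticalpoint} because $\bra{\varphi}E\ket{\varphi}=\sum_j\mathrm{Tr}(H_j\rho_j)$ for the single-qudit reductions $\rho_j$.

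First I would establish the convexity lemma. Fix $E\in\mathfrak p$ and set $g_t=\exp(tE)$. Because $E$ is Hermitian, $g_t^\dagger g_t=\exp(2tE)$, and diagonalizing $E$ with real eigenvalues $\lambda_k$ and writing $\ket{\varphi}=\sum_k c_k\ket{e_k}$ gives
\[
\lVert g_t\ket{\varphi}\rVert^2=\sum_k|c_k|^2 e^{2t\lambda_k}.
\]
Hence $\phi(t)=\log\lVert g_t\ket{\varphi}\rVert^2$ is a log-sum-exp, so it is convex, with $\phi'(0)=2\bra{\varphi}E\ket{\varphi}/\braket{\varphi|\varphi}$. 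Criticality says exactly that $\bra{\varphi}E\ket{\varphi}=0$ for every $E\in\mathfrak g$; since the anti-Hermitian directions leave the norm fixed and the Hermitian ones contribute $\phi'(0)$, this is equivalent to $t=0$ being a critical point of $\phi$ for every $E\in\mathfrak p$, hence, by convexity, a global minimum. Combined with the polar decomposition and the isometry property of $K$, this yields Part~1.

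For Part~2, suppose $\ket{\varphi}$ is critical and $\lVert g\ket{\varphi}\rVert^2=\braket{\varphi|\varphi}$; write $g=k\exp(E)$. Then $\phi$ attains its minimum both at $t=0$ and at $t=1$, so the convexity forces $\phi$ to be constant on $[0,1]$; a finite exponential sum constant on an interval has only its zero-rate term surviving, so $c_k=0$ whenever $\lambda_k\ne 0$, i.e.\ $E\ket{\varphi}=0$ and thus $\exp(E)\ket{\varphi}=\ket{\varphi}$. Then $g\ket{\varphi}=k\ket{\varphi}$ with $h:=k\in K$, as required.

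Part~3 is where the genuine work lies, since the convexity machinery of Parts~1--2 is local to a single orbit, whereas closedness concerns limit points \emph{outside} the orbit. For the ``only if'' direction, when $O$ is closed and $\ket{\varphi}\ne 0$ the origin lies outside $\overline O=O$, so the norm is bounded away from $0$ on $O$ and attains a positive minimum there by compactness of bounded closed subsets of $O$; any minimizer is critical by Part~1. For the converse, a critical $\ket{\psi}\in O$ has minimal norm $r=\lVert\ket{\psi}\rVert$, whence every point of $\overline O$ has norm at least $r$; invoking the standard GIT fact that $\overline O$ contains a unique closed orbit, together with the uniqueness of the minimal-norm $K$-orbit supplied by Part~2, one identifies that closed orbit with $O$ itself, so $O$ is closed. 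The main obstacle is precisely this external ingredient: proving the converse rigorously requires either the Hilbert--Mumford analysis of one-parameter degenerations or the existence-and-uniqueness theorem for the closed orbit in an orbit closure. I would therefore isolate that one fact as imported from general GIT and treat everything else by the self-contained convexity argument above.
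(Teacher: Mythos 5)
The paper does not prove this theorem at all---its ``proof'' is a citation to Gour--Wallach---so your reconstruction is necessarily a different route, and it is essentially the standard Kempf--Ness argument that the cited source formalizes. Parts 1 and 2 of your argument are correct and self-contained: the polar decomposition $g=k\exp(E)$ with $E$ traceless Hermitian, the convexity of $t\mapsto\log\lVert\exp(tE)\ket{\varphi}\rVert^2$, and the rigidity of a convex function attaining its minimum twice all go through, with the minor remark that the reason criticality reduces to the Hermitian directions is the complex linearity of $E\mapsto\bra{\varphi}E\ket{\varphi}$ on $\mathfrak{g}=\mathfrak{p}\oplus \im\,\mathfrak{p}$ (vanishing on $\mathfrak p$ forces vanishing on $\im\,\mathfrak p$), rather than the isometry of the anti-Hermitian flow, though both observations are true. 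The only place your proof is not self-contained is the converse of Part 3: the identification of the orbit with the unique closed orbit in its closure requires knowing that minimal-norm vectors of $\overline{O}$ lie in that closed orbit, which is genuinely external (Hilbert--Mumford or the closed-orbit existence/uniqueness theorem), and you correctly flag it as imported rather than pretending the convexity machinery covers it. What your approach buys over the paper's is an explicit, mostly elementary derivation of the two statements actually used downstream (Parts 1 and 2, which drive \Cref{thm:AMEclassification} and \Cref{lem:semisimplecritical}); what the citation buys is not having to adjudicate exactly which GIT fact closes the gap in Part 3.
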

\begin{proof}
    See \cite[{Theorem~2}]{Gour_2011}.
\end{proof}

The first item of \Cref{thm:Kempf-Ness} characterizes critical points in more geometrical terms: $\ket{\varphi}\in \mathcal{H}$ is critical if and only if it minimizes the norm in its $G$-orbit. The second item of \Cref{thm:Kempf-Ness} states that if two points of a $G$-orbit both minimize the norm, then the points are in fact related by a LU transformation. The third item of \Cref{thm:Kempf-Ness} characterizes closed $G$-orbits; such orbits are significant because they can be separated by $G$-invariant polynomials. Specifically, Gour and Wallach stated the following.

\begin{theorem}[Gour-Wallach, \cite{PhysRevLett.111.060502}]\label{thm:closed}
    Let $\ket{\varphi},\ket{\psi}\in\mathcal{H}$ be normalized states with closed $G$-orbits. There exists $\theta\in [0,2\pi)$ and $g\in G$ such that
    $\ket{\varphi}=e^{\im \theta} g\ket{\psi}/\sqrt{\bra{\varphi}g^\dagger g\ket{\psi}}$
    if and only if $f(\ket{\varphi})/h(\ket{\varphi})=f(\ket{\psi})/h(\ket{\psi})$ for all pairs of homogeneous $G$-invariant polynomials $f,h$ of the same degree with $h(\ket{\psi})\neq 0$.
\end{theorem}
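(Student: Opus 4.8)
The plan is to treat the two implications separately, since the forward direction is essentially a formal computation and all the real content lies in the converse. Suppose first that $\ket{\varphi}=e^{\im\theta}g\ket{\psi}/\sqrt{\bra{\psi}g^\dagger g\ket{\psi}}$ for some $\theta$ and $g\in G$; writing $c=e^{\im\theta}/\sqrt{\bra{\psi}g^\dagger g\ket{\psi}}\in\CC^\times$ this says simply that $\ket{\varphi}=c\,g\ket{\psi}$. For any homogeneous $G$-invariant polynomial $f$ of degree $d$ I would then compute $f(\ket{\varphi})=f(c\,g\ket{\psi})=c^{d}f(g\ket{\psi})=c^{d}f(\ket{\psi})$, using homogeneity in the middle step and $G$-invariance in the last. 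Hence for any pair $f,h$ of the same degree the factor $c^{d}$ cancels and $f(\ket{\varphi})/h(\ket{\varphi})=f(\ket{\psi})/h(\ket{\psi})$ whenever $h(\ket{\psi})\neq 0$ (and then $h(\ket{\varphi})=c^d h(\ket{\psi})\neq 0$ as well).

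For the converse the first move is to recognize that the asserted conclusion only places $\ket{\varphi}$ and $\ket{\psi}$ in a common orbit of the enlarged group $\tilde G=\CC^\times\!\cdot G$ of scalars times $G$: the relation $\ket{\varphi}=c\,g\ket{\psi}$ with $c\in\CC^\times$, $g\in G$ is equivalent to the displayed normalized form, since writing $c=re^{\im\theta}$ with $r>0$ and imposing $\braket{\varphi|\varphi}=1$ forces $r=1/\sqrt{\bra{\psi}g^\dagger g\ket{\psi}}$. So it suffices to prove that equality of all same-degree invariant ratios forces $\ket{\varphi}$ and $\ket{\psi}$ into the same $\tilde G$-orbit, equivalently into the same $G$-orbit inside $\mathbb{P}\mathcal{H}$. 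Here I would use the standing hypothesis that both orbits $G\ket{\varphi}$ and $G\ket{\psi}$ are closed and that the states are nonzero: then neither orbit closure contains $0$, so both states are polystable, some positive-degree invariant is nonzero at each, their images in $\mathbb{P}\mathcal{H}$ lie in the semistable locus, and the induced $G$-orbits there are closed.

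The heart of the argument is the separation property of the projective GIT quotient $\pi\colon (\mathbb{P}\mathcal{H})^{\mathrm{ss}}\to (\mathbb{P}\mathcal{H})\sslash G=\operatorname{Proj}\big(\bigoplus_d \CC[\mathcal{H}]^G_d\big)$ for the reductive group $G$: two semistable points have the same image under $\pi$ exactly when the closures of their $G$-orbits meet in the semistable locus, and $\pi$ is realized in homogeneous coordinates by the homogeneous invariants, so $\pi([\ket{\varphi}])=\pi([\ket{\psi}])$ means precisely $f(\ket{\varphi})h(\ket{\psi})=h(\ket{\varphi})f(\ket{\psi})$ for all same-degree $f,h$, which is what the ratio hypothesis delivers. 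Since both orbits are closed in the semistable locus, equal images force $G[\ket{\varphi}]=G[\ket{\psi}]$, hence $\ket{\varphi}=c\,g\ket{\psi}$ for some $c\in\CC^\times$, $g\in G$, and the bookkeeping above produces the stated normalized form. I expect the main obstacle to be the careful passage between the affine cone $\mathcal{H}$—where norms, the Hermitian structure, and the Kempf--Ness theorem (\Cref{thm:Kempf-Ness}) live—and its projectivization, where ratios of equal-degree invariants are the natural coordinates: one must check that closedness of the $G$-orbit in $\mathcal{H}\setminus\{0\}$ matches closedness of the induced orbit in $(\mathbb{P}\mathcal{H})^{\mathrm{ss}}$, and that \emph{ratios agree} genuinely encodes \emph{same point of the Proj quotient} rather than merely agreement of each invariant up to a root of unity. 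The Kempf--Ness theorem supports this by guaranteeing that each closed orbit contains a critical, norm-minimizing representative, which is what makes the polystability hypotheses available and pins down the positive scalar $r$ once the phase $\theta$ has been extracted.
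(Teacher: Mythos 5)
The paper does not actually prove this statement---it is quoted from Gour--Wallach with only a citation---so there is no internal proof to compare against; your argument is the standard GIT one and is correct in outline. The forward direction is exactly the homogeneity-plus-invariance computation you give. For the converse, the essential input is that for a reductive group acting linearly on affine space, distinct \emph{closed} orbits are separated by invariants; but you can reach it without the detour through $\mathbb{P}\mathcal{H}$ and $\operatorname{Proj}$, which disposes of the two points you yourself flag as ``the main obstacle'' and leave unproved (matching closedness in the cone with closedness in the semistable locus, and the root-of-unity ambiguity). Concretely: let $S$ be the set of degrees $d$ for which some homogeneous invariant $h$ of degree $d$ has $h(\ket{\psi})\neq 0$; for $d\in S$ the ratio hypothesis makes $\mu_d=h(\ket{\varphi})/h(\ket{\psi})$ independent of $h$ and gives $f(\ket{\varphi})=\mu_d f(\ket{\psi})$ for every invariant $f$ of degree $d$, while $d\mapsto\mu_d$ is a homomorphism from the numerical semigroup $S$ to $\CC^\times$ and hence of the form $d\mapsto c^d$ for some $c\in\CC^\times$; invariants of degree outside $S$ vanish at both states (compare $f^{d_0}$ with $h_0^{\deg f}$ for a fixed $h_0$ of degree $d_0\in S$). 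Thus every invariant takes the same value at $\ket{\varphi}$ and at $c\ket{\psi}$, both of which have closed $G$-orbits in $\mathcal{H}$, so $G\ket{\varphi}=G(c\ket{\psi})$ by affine separation of closed orbits, and $\braket{\varphi|\varphi}=1$ pins down $|c|$ as you describe. This is the same argument as yours in substance, but it closes the gaps you deferred rather than outsourcing them to properties of the projective quotient. Finally, note that the denominator in the statement as printed, $\sqrt{\bra{\varphi}g^\dagger g\ket{\psi}}$, should be $\sqrt{\bra{\psi}g^\dagger g\ket{\psi}}$ (the norm of $g\ket{\psi}$); you silently used the corrected form, which is the one appearing in Gour--Wallach.
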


By \Cref{thm:Kempf-Ness}, two critical states are $G$-equivalent if and only if they are $K$-equivalent. So $G$-invariant polynomials distinguish LU orbits of critical states.

\subsection{Vinberg theory}\label{sec:Vinberg} In this section, we state some interesting results from Vinberg's theory of graded Lie algebras \cite{Vinberg_1976}. The relevance to QIT should become clearer from the two examples of graded Lie algebras given in \Cref{sec:2graded,sec:3graded}. We begin by fixing some notation.

Let $\mathfrak{g}$ denote a $\ZZ_m$-graded complex Lie algebra for some $m\geq 2$. By definition, the fact that $\mathfrak{g}$ is $\ZZ_m$-graded means that there exist indexed subspaces $\mathfrak{g}_i$ such that \[\mathfrak{g}= \mathfrak{g}_0\oplus\mathfrak{g}_1\oplus\dots\oplus\mathfrak{g}_{m-1}\quad \text{and}\quad[\mathfrak{g}_i,\mathfrak{g}_j]\subset\mathfrak{g}_{i+j \mod m}.\] Let $\tilde{G}$ be the connected Lie group corresponding to $\mathfrak{g}$ and let $\tilde{G}_0\leq \tilde{G}$ be the connected subgroup corresponding to $\mathfrak{g}_0$. Since $[\mathfrak{g}_0,\mathfrak{g}_1]\subset\mathfrak{g}_1$, the adjoint action of $\mathfrak{g}_0$ restricts to an algebra representation
\[
\mathfrak{g}_0\to\text{End}(\mathfrak{g}_1),\quad E\mapsto \text{ad}(E)|_{\mathfrak{g}_1}
\]
which corresponds to the group representation
\[\pi:\tilde{G}_0\to\GL(\mathfrak{g}_1),\quad A\mapsto\text{Ad}(A)|_{\mathfrak{g}_1}.\]
Let $L=\pi(\tilde{G}_0)$.

\begin{definition}\label{defi:closedorbit}
    An element $x\in \mathfrak{g}_1$ is \textit{semisimple} if its orbit under the action of $L$ is closed (see \cite[{Proposition 3}]{Vinberg_1976}).
    A subspace $\mathfrak{c}\subset\mathfrak{g}_1$ is a \textit{Cartan subspace} if it is maximal with respect to the following properties: $[\mathfrak{c},\mathfrak{c}]=0$ and every element of $\mathfrak{c}$ is semisimple.
\end{definition}

\begin{theorem}\label{thm:semisimpleclassification}
    For any Cartan subspaces $\mathfrak{c}$ and $\mathfrak{c}'$ of $\mathfrak{g}_1$, there exists an element $A\in L$ such that $A\mathfrak{c}=\mathfrak{c}'$. Consequently, every semisimple element of $\mathfrak{g}_1$ is mapped by some element of $L$ to a fixed Cartan subspace.
\end{theorem}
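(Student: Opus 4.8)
The plan is to reduce the conjugacy of Cartan subspaces to a single tangent-space computation at a generic point, in close analogy with the classical proof that Cartan subalgebras of a reductive Lie algebra are conjugate under the adjoint group. First I would introduce the \emph{regular} elements of $\mathfrak{g}_1$, namely those $x$ for which the centralizer $\mathfrak{z}_{\mathfrak{g}_1}(x)=\{y\in\mathfrak{g}_1:[x,y]=0\}$ has minimal dimension. The preliminary fact I need is that a generic element $x$ of any Cartan subspace $\mathfrak{c}$ is both regular and semisimple, and satisfies $\mathfrak{z}_{\mathfrak{g}_1}(x)=\mathfrak{c}$ \emph{exactly}: since $\mathfrak{c}$ is abelian one always has $\mathfrak{c}\subseteq\mathfrak{z}_{\mathfrak{g}_1}(x)$, and the reverse inclusion holds away from the proper subvariety of $\mathfrak{c}$ on which the rank of $\text{ad}(x)|_{\mathfrak{g}_0\to\mathfrak{g}_1}$ drops.

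The heart of the argument is a graded eigenspace decomposition. For $x$ semisimple in Vinberg's sense, equivalently $\text{ad}(x)$ a semisimple endomorphism of $\mathfrak{g}$ by \cite{Vinberg_1976}*{Proposition 3}, the operator $\text{ad}(x)$ is diagonalizable, so $\mathfrak{g}=\ker\text{ad}(x)\oplus\text{im}\,\text{ad}(x)$; because $\text{ad}(x)$ raises the grading by one, both summands are graded, and intersecting with $\mathfrak{g}_1$ yields
\[
\mathfrak{g}_1=\mathfrak{z}_{\mathfrak{g}_1}(x)\oplus[\mathfrak{g}_0,x].
\]
I would then compute the differential of the action map $\alpha\colon L\times\mathfrak{c}\to\mathfrak{g}_1$, $(A,x)\mapsto\text{Ad}(A)x$, at $(e,x)$; its image is $[\mathfrak{g}_0,x]+\mathfrak{c}$. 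For generic $x\in\mathfrak{c}$ the identity $\mathfrak{c}=\mathfrak{z}_{\mathfrak{g}_1}(x)$ turns the displayed decomposition into $[\mathfrak{g}_0,x]+\mathfrak{c}=\mathfrak{g}_1$, so $\alpha$ is a submersion there and $L\cdot\mathfrak{c}$ contains a Zariski-dense open subset of $\mathfrak{g}_1$.

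Applying this to two Cartan subspaces $\mathfrak{c}$ and $\mathfrak{c}'$, the sets $L\cdot\mathfrak{c}$ and $L\cdot\mathfrak{c}'$ each contain a dense open subset of the irreducible space $\mathfrak{g}_1$, so they meet; since the regular semisimple locus is also dense open and $L$-invariant, the common point may be taken regular semisimple. This gives $A,A'\in L$ and generic $x\in\mathfrak{c}$, $x'\in\mathfrak{c}'$ with $\text{Ad}(A)x=\text{Ad}(A')x'$. Setting $B=(A')^{-1}A$ produces $\text{Ad}(B)x=x'$, whence, because conjugation carries centralizers to centralizers,
\[
B\mathfrak{c}=B\,\mathfrak{z}_{\mathfrak{g}_1}(x)=\mathfrak{z}_{\mathfrak{g}_1}(\text{Ad}(B)x)=\mathfrak{z}_{\mathfrak{g}_1}(x')=\mathfrak{c}',
\]
which is the first assertion. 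For the ``consequently'' clause I would show every semisimple $y\in\mathfrak{g}_1$ lies in some Cartan subspace: the centralizer $\mathfrak{z}_{\mathfrak{g}}(y)$ is reductive and, as $\text{ad}(y)$ sends distinct graded pieces into distinct ones, inherits the $\ZZ_m$-grading; a Cartan subspace $\mathfrak{c}_0$ of its degree-one part must contain $y$ by maximality (as $y$ is a central semisimple element commuting with all of $\mathfrak{c}_0$), and $\mathfrak{c}_0$ is then maximal in $\mathfrak{g}_1$ as well. Conjugating $\mathfrak{c}_0$ onto the fixed Cartan subspace by the first part carries $y$ into it.

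I expect the main obstacle to lie not in the tangent-space computation but in the two structural inputs it relies on: the non-jumping of the centralizer, $\mathfrak{z}_{\mathfrak{g}_1}(x)=\mathfrak{c}$ on a dense subset of $\mathfrak{c}$, and the reductivity together with the induced grading of $\mathfrak{z}_{\mathfrak{g}}(y)$ used to place an arbitrary semisimple element into a Cartan subspace. Both are standard consequences of the structure theory of centralizers of semisimple elements but are genuinely needed, so in a final write-up I would cite \cite{Vinberg_1976} for these facts and present the graded decomposition and differential computation as the conceptual core of the proof.
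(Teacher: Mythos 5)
The paper offers no argument for this theorem beyond the citation to Vinberg, so your proposal is necessarily an independent reconstruction. Much of it is sound: the graded decomposition $\mathfrak{g}_1=\mathfrak{z}_{\mathfrak{g}_1}(x)\oplus[\mathfrak{g}_0,x]$ for semisimple $x$ is correct and is the right tangent-space computation, and your argument in the ``consequently'' clause --- placing a semisimple $y$ inside a Cartan subspace of the graded reductive centralizer $\mathfrak{z}_{\mathfrak{g}}(y)$ and using maximality twice --- is fine.

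The gap is the preliminary claim that $\mathfrak{z}_{\mathfrak{g}_1}(x)=\mathfrak{c}$ for generic $x\in\mathfrak{c}$, on which both the density of $L\cdot\mathfrak{c}$ and your final chain of equalities rest. Your rank-dropping remark only shows that $\dim\mathfrak{z}_{\mathfrak{g}_1}(x)$ is generically constant and minimal on $\mathfrak{c}$; it does not show that the generic value equals $\dim\mathfrak{c}$, and in general it does not, because $\mathfrak{z}_{\mathfrak{g}_1}(x)$ can contain nilpotent elements commuting with all of $\mathfrak{c}$. Concretely, let $\mathfrak{g}=\mathfrak{sl}_3\oplus\mathfrak{sl}_2$ with the $\ZZ_3$-grading given by $\mathrm{Ad}(\diag(1,\omega,\omega^2))$ on the first factor (so $\mathfrak{g}_1\supset\mathrm{span}\{E_{12},E_{23},E_{31}\}$) and by assigning degrees $0,1,2$ to $h,e,f$ on the second. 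Then $\mathfrak{c}=\CC(E_{12}+E_{23}+E_{31})$ is a Cartan subspace, yet $\mathfrak{z}_{\mathfrak{g}_1}(x)=\mathfrak{c}\oplus\CC e\supsetneq\mathfrak{c}$ for every $x\in\mathfrak{c}$, and $L\cdot\mathfrak{c}$ is contained in the proper subspace $\mathrm{span}\{E_{12},E_{23},E_{31}\}\oplus 0$, hence is not dense; the theorem still holds here, but not by your argument. Indeed, via your own decomposition the identity $\mathfrak{z}_{\mathfrak{g}_1}(x)=\mathfrak{c}$ is equivalent to density of the semisimple elements in $\mathfrak{g}_1$, which is true for symmetric pairs ($m=2$, Kostant--Rallis) and for the $\mathfrak{e}_6$ grading used in this paper (generic tensors in $\mathcal{H}_{333}$ are semisimple, and $\dim[\mathfrak{g}_0,x]+\dim\mathcal{C}=24+3=27$), but fails for general $\ZZ_m$-graded $\mathfrak{g}$, which is the generality in which the theorem is stated. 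The last step can be repaired by replacing $\mathfrak{z}_{\mathfrak{g}_1}(x)$ with the conjugation-equivariant characterization $\mathfrak{c}=\mathfrak{z}(\mathfrak{z}_{\mathfrak{g}}(x))\cap\mathfrak{g}_1$ for generic $x$ (which does follow from maximality), but the density step admits no such repair: a correct general proof must produce the conjugating element by a different mechanism, which is presumably why the paper simply defers to \cite{Vinberg_1976}*{Theorem 1}.
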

\begin{proof}
    See \cite[{Theorem 1}]{Vinberg_1976}.
\end{proof}

\begin{definition}
The \textit{normalizer} of a Cartan subspace $\mathfrak{c}\subset\mathfrak{g}_1$ is the subgroup of $L$ consisting of elements that fix $\mathfrak{c}$, denoted
$
N(\mathfrak{c})=\{A\in L:A\mathfrak{c}=\mathfrak{c}\}.
$
Let $\mu$ denote the obvious representation $N(\mathfrak{c})\to\GL(\mathfrak{c})$. The kernel of $\mu$ is the \textit{centralizer} of $\mathfrak{c}$, denoted
$
C(\mathfrak{c})=\{A\in L:Ax=x\text{ for all }x\in\mathfrak{c}\}.
$
The \textit{Weyl group} $W(\mathfrak{c})$ is the image $\mu(N(\mathfrak{c}))\cong N(\mathfrak{c})/C(\mathfrak{c}).$
\end{definition}

\begin{theorem}\label{thm:weylgroupequivalence}
    If the element $x$ of the Cartan subspace $\mathfrak{c}$ can be mapped to an element $y$ of $\mathfrak{c}$ by a transformation from $L$, then $x$ can be mapped to $y$ by a transformation from $W(\mathfrak{c})$.
\end{theorem}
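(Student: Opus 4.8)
The plan is to reduce the statement to Vinberg's conjugacy theorem for Cartan subspaces (\Cref{thm:semisimpleclassification}), applied not to $\mathfrak{g}$ itself but to the centralizer of the fixed element $x$. This is the graded analogue of the classical fact that $G$-conjugacy of elements of a Cartan subalgebra is already realized by the Weyl group, and the argument follows the same centralizer-reduction template.

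Suppose $g \in L$ satisfies $gx = y$ with $x, y \in \mathfrak{c}$. First I would pass to the centralizer $\mathfrak{z} = \{z \in \mathfrak{g} : [z, x] = 0\}$ of the semisimple element $x$. Since $x$ is semisimple, hence ad-semisimple, $\mathfrak{z}$ is a reductive subalgebra of $\mathfrak{g}$ (using that $\mathfrak{g}$ is reductive, as in the $\mathfrak{e}_6$ application); and because $x$ is homogeneous of degree $1$, the map $\operatorname{ad}(x)$ preserves the grading, so $\mathfrak{z}$ inherits a $\ZZ_m$-grading $\mathfrak{z} = \bigoplus_i \mathfrak{z}_i$ with $\mathfrak{z}_i = \mathfrak{z} \cap \mathfrak{g}_i$. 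Thus Vinberg's theory applies to $\mathfrak{z}$. Let $L_x \le L$ be the subgroup corresponding to $\mathfrak{z}_0$, that is, the image under $\pi$ of the connected subgroup with Lie algebra $\mathfrak{z}_0$. The crucial point here is that every element of $L_x$ fixes $x$, because $\mathfrak{z}_0$ centralizes $x$ and $L_x$ is generated by the corresponding exponentials.

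Next I would show that both $\mathfrak{c}$ and $g^{-1}\mathfrak{c}$ are Cartan subspaces of the graded algebra $\mathfrak{z}$, i.e. maximal abelian semisimple subspaces of $\mathfrak{z}_1$ under the $L_x$-action. Both contain $x$: obviously $x \in \mathfrak{c}$, and $x = g^{-1}y \in g^{-1}\mathfrak{c}$. Both are abelian, so they commute with $x$ and therefore lie in $\mathfrak{z}$; being homogeneous of degree $1$ they lie in $\mathfrak{z}_1$. Semisimplicity of their elements is inherited, since it coincides with ad-semisimplicity and restriction to a subalgebra preserves this. Maximality is automatic: any abelian subspace of semisimple elements of $\mathfrak{z}_1$ properly containing $\mathfrak{c}$ would already violate maximality of $\mathfrak{c}$ inside $\mathfrak{g}_1$, as $\mathfrak{z}_1 \subset \mathfrak{g}_1$; the same applies to $g^{-1}\mathfrak{c}$, which is itself a Cartan subspace of $\mathfrak{g}_1$ because $g^{-1} \in L$ is a grading-preserving automorphism. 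Applying \Cref{thm:semisimpleclassification} to $\mathfrak{z}$ then produces $h \in L_x$ with $h(g^{-1}\mathfrak{c}) = \mathfrak{c}$. Setting $w = g h^{-1}$, the relation $h^{-1}\mathfrak{c} = g^{-1}\mathfrak{c}$ gives $w\mathfrak{c} = g(g^{-1}\mathfrak{c}) = \mathfrak{c}$, so $w \in N(\mathfrak{c})$; and since $h$ fixes $x$, we get $wx = g h^{-1} x = gx = y$. Hence $\mu(w) \in W(\mathfrak{c})$ maps $x$ to $y$, as required.

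I expect the only genuine work to be the reduction-to-the-centralizer step: verifying that $\mathfrak{z}$ is reductive and $\ZZ_m$-graded so that Vinberg's Theorem 1 truly applies, and that the group $L_x$ acting on $\mathfrak{z}_1$ (playing the role of ``$L$'' for $\mathfrak{z}$) is simultaneously a subgroup of $L$ all of whose elements fix $x$. Both of these hinge on the identification of Vinberg-semisimplicity with ordinary ad-semisimplicity, together with the standard structural fact that the centralizer of a semisimple element in a reductive Lie algebra is reductive. Once these structural points are established, the closing identities $w\mathfrak{c} = \mathfrak{c}$ and $wx = y$ are purely formal.
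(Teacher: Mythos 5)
The paper does not prove this statement; it simply cites \cite{Vinberg_1976}*{Theorem 2}. Your argument is a reconstruction of the standard centralizer-reduction proof (which is essentially Vinberg's own), and its skeleton is sound: pass to $\mathfrak{z}=\mathfrak{z}_{\mathfrak{g}}(x)$, observe that $\mathfrak{c}$ and $g^{-1}\mathfrak{c}$ are both Cartan subspaces of the graded reductive algebra $\mathfrak{z}$ containing $x$, conjugate one to the other by an element $h$ of the centralizer group, and set $w=gh^{-1}\in N(\mathfrak{c})$ with $wx=y$. The closing identities are correct as you wrote them.

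Two justifications need repair. First, $\operatorname{ad}(x)$ does \emph{not} preserve the grading for $x\in\mathfrak{g}_1$ --- it maps $\mathfrak{g}_i$ to $\mathfrak{g}_{i+1}$. The correct statement is that $\operatorname{ad}(x)$ is a \emph{homogeneous} operator of degree $1$, so its kernel $\mathfrak{z}$ is still a graded subspace: writing $z=\sum_i z_i$ with $[z,x]=0$, the components $[z_i,x]$ lie in distinct graded pieces and must vanish separately. The conclusion you need survives, but the reason you give is wrong. Second, your maximality step quietly uses the implication ``semisimple in $\mathfrak{z}$ $\Rightarrow$ semisimple in $\mathfrak{g}$,'' whereas the direction you actually justify (``restriction to a subalgebra preserves this'') is the opposite one, $\mathfrak{g}\Rightarrow\mathfrak{z}$. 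The implication you need is true because the centralizer of a semisimple element is an algebraic reductive subalgebra, so Jordan decompositions in $\mathfrak{z}$ and in $\mathfrak{g}$ agree, but this is the nontrivial inheritance and should be stated as such rather than waved at. With those two points fixed, the proof is complete and matches the argument the cited theorem rests on.
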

\begin{proof}
    See \cite[{Theorem 2}]{Vinberg_1976}.
\end{proof}

If $\mathfrak{c}$ and $\mathfrak{c}'$ are Cartan subspaces of $\mathfrak{g}_1$, then, by \Cref{thm:semisimpleclassification}, $N(\mathfrak{c})$ and $N(\mathfrak{c}')$ are isomorphic by a similarity transformation. Thus, despite our choice of notation, the Weyl group $W(\mathfrak{c})$ is determined up to isomorphism by $\mathfrak{g}$ and its $\ZZ_m$-grading.

\begin{theorem}\label{thm:restriction}
    The restriction map $r:\CC[\mathfrak{g}_1]^L\to \CC[\mathfrak{c}]^{W(\mathfrak{c})}$ where $r(f)= f|_{\mathfrak{c}}$ is an isomorphism of invariant algebras.
\end{theorem}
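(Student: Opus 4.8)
The plan is to show that $r$ is a well-defined injective homomorphism of algebras and then to reinterpret it as the comorphism of a morphism of affine quotient varieties that is bijective on points; normality of the target quotient will then force this morphism to be an isomorphism. First I would verify that $r$ lands in $\CC[\mathfrak{c}]^{W(\mathfrak{c})}$. By definition every element of $W(\mathfrak{c})$ is the restriction to $\mathfrak{c}$ of some $A\in N(\mathfrak{c})\subseteq L$; since any $f\in\CC[\mathfrak{g}_1]^L$ satisfies $f\circ A=f$, restricting gives $(f|_{\mathfrak{c}})\circ(A|_{\mathfrak{c}})=f|_{\mathfrak{c}}$, so $r(f)$ is $W(\mathfrak{c})$-invariant. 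That $r$ is an algebra homomorphism is immediate.

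Next I would prove injectivity using \Cref{thm:semisimpleclassification}. Set $L\cdot\mathfrak{c}=\{Ax:A\in L,\ x\in\mathfrak{c}\}$. By \Cref{thm:semisimpleclassification} this set contains every semisimple element of $\mathfrak{g}_1$, and the semisimple elements are Zariski dense (the regular semisimple locus is a nonempty Zariski-open set, cf.\ Vinberg). If $r(f)=0$, then $f$ vanishes on $\mathfrak{c}$ and, by $L$-invariance, on all of $L\cdot\mathfrak{c}$, hence on a dense subset of $\mathfrak{g}_1$, so $f=0$. Thus I may regard $R:=r(\CC[\mathfrak{g}_1]^L)$ as a subring of $S:=\CC[\mathfrak{c}]^{W(\mathfrak{c})}$, and it remains to show $R=S$.

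For surjectivity I would pass to affine quotients. Because $\mathfrak{g}$ is semisimple and graded, $\tilde{G}_0$ and hence $L$ is reductive, so $\CC[\mathfrak{g}_1]^L$ is a finitely generated normal domain and $X:=\operatorname{Spec}\CC[\mathfrak{g}_1]^L=\mathfrak{g}_1\sslash L$ is the categorical quotient, whose points parametrize the closed (equivalently semisimple) $L$-orbits; likewise $W(\mathfrak{c})$ is finite, so $Y:=\operatorname{Spec}\CC[\mathfrak{c}]^{W(\mathfrak{c})}=\mathfrak{c}\sslash W(\mathfrak{c})$ is the normal orbit space. The inclusion $\mathfrak{c}\hookrightarrow\mathfrak{g}_1$ induces a morphism $\phi\colon Y\to X$ whose comorphism is $r$. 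I would then check that $\phi$ is bijective on points: it is surjective because every semisimple orbit meets $\mathfrak{c}$ by \Cref{thm:semisimpleclassification}, and injective because two points of $\mathfrak{c}$ lying in a common $L$-orbit lie in a common $W(\mathfrak{c})$-orbit by \Cref{thm:weylgroupequivalence}.

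Finally, to upgrade this set-theoretic bijection to an isomorphism I would argue that $\phi$ is a quasi-finite, birational morphism onto the normal variety $X$ and invoke Zariski's main theorem, which makes such a morphism an open immersion; being surjective, it is then an isomorphism, so $r=\phi^\ast$ is an isomorphism and $R=S$. Here injectivity of $r$ already gives that $\phi$ is dominant, and bijectivity makes it quasi-finite; birationality follows by combining bijectivity with the equidimensionality $\dim Y=\dim\mathfrak{c}=\dim X$ (a dimension count from Vinberg's theory: a generic orbit has codimension $\dim\mathfrak{c}$), since in characteristic zero a dominant, generically finite, bijective map has generic fiber a single reduced point. The main obstacle is precisely this last passage from the combinatorics of orbits—which \Cref{thm:semisimpleclassification,thm:weylgroupequivalence} hand to us directly—to an isomorphism of coordinate rings; its real content is the commutative-algebra input (reductivity of $L$, the resulting normality of $\mathfrak{g}_1\sslash L$, and the Vinberg dimension count guaranteeing birationality) needed to conclude that a bijective birational morphism onto a normal variety is an isomorphism.
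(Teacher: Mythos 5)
Your argument is essentially correct, but it is worth noting that the paper does not prove \Cref{thm:restriction} at all: its ``proof'' is a citation to \cite{Vinberg_1976}*{Theorem 7}. What you have written is a reconstruction of the standard proof of the graded Chevalley restriction theorem, and it is in fact close in spirit to Vinberg's own argument: injectivity from density of the semisimple locus, then a bijective morphism $\mathfrak{c}/W(\mathfrak{c})\to\mathfrak{g}_1\sslash L$ which is forced to be an isomorphism by normality of the target and Zariski's main theorem in characteristic zero. The steps that go beyond \Cref{thm:semisimpleclassification,thm:weylgroupequivalence} are correctly identified and are all standard: reductivity of $L$ (so that $\CC[\mathfrak{g}_1]^L$ is a finitely generated normal domain), the identification of points of $\mathfrak{g}_1\sslash L$ with closed orbits, and the fact that elements of $\mathfrak{c}$ have closed orbits (so that two points of $\mathfrak{c}$ with the same image in $\mathfrak{g}_1\sslash L$ actually lie on a common $L$-orbit, which is what \Cref{thm:weylgroupequivalence} needs as input).

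Two small caveats. First, the density of the semisimple locus is itself a nontrivial theorem of Vinberg's theory (it rests on the finiteness of the number of orbits in each fiber of the quotient map, or equivalently on the regular semisimple locus being open), and it silently assumes the rank is positive; in the degenerate case $\mathfrak{c}=0$ both invariant rings are $\CC$ and the statement is vacuous, so this is harmless but should be said. If you want to avoid the density input altogether, injectivity follows from the Jordan decomposition: for any $x\in\mathfrak{g}_1$ the semisimple part $x_s$ lies in $\overline{L\cdot x}$, and an $L$-invariant polynomial is constant on orbit closures, so $f(x)=f(x_s)=0$ whenever $f$ vanishes on all semisimple elements. Second, your dimension count $\dim Y=\dim X$ is not really needed as a separate input: once $\phi$ is a dominant bijective morphism of irreducible varieties in characteristic zero, the generic fibre is a single reduced point, which already gives birationality.
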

\begin{proof}
    See \cite[{Theorem 7}]{Vinberg_1976}.
\end{proof}

\begin{definition}
    An element of $\GL(\CC^K)$ is a \textit{complex reflection} if it is diagonalizable, has one eigenvalue equal to a root of unity, and all but one of its eigenvalues is equal to 1. A subgroup of $\GL(\CC^K)$ is a \textit{complex reflection group} if it is generated by complex reflections.
\end{definition}

\begin{theorem}\label{thm:complexreflection}
    The Weyl group of a $\ZZ_m$-graded Lie algebra is a finite complex reflection group.
\end{theorem}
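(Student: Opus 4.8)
The plan is to prove the theorem through the Chevalley--Shephard--Todd theorem, which characterizes finite complex reflection groups as exactly those finite subgroups $W\leq\GL(V)$ of a finite-dimensional complex vector space for which the invariant ring $\CC[V]^W$ is a polynomial algebra. Accordingly, the argument splits into three parts: (i) show that $W(\mathfrak c)$ is finite and acts faithfully on the finite-dimensional space $\mathfrak c$; (ii) show that $\CC[\mathfrak c]^{W(\mathfrak c)}$ is a polynomial algebra; and (iii) invoke Chevalley--Shephard--Todd in the direction ``polynomial invariants $\Rightarrow$ reflection group.''

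For part (i), faithfulness is built into the definition: since $W(\mathfrak c)=\mu(N(\mathfrak c))\cong N(\mathfrak c)/C(\mathfrak c)$ with $C(\mathfrak c)=\ker\mu$, the group $W(\mathfrak c)$ embeds into $\GL(\mathfrak c)$. I would prove finiteness by showing that $N(\mathfrak c)$ and $C(\mathfrak c)$ have the same Lie algebra; then the closed subgroup $C(\mathfrak c)$ contains the identity component $N(\mathfrak c)^\circ$, so $W(\mathfrak c)$ is a quotient of the finite component group of the closed (algebraic) subgroup $N(\mathfrak c)\leq L$. To compare the Lie algebras, note that $\mathfrak c$ is abelian and consists of semisimple elements, so $\operatorname{ad}(\mathfrak c)$ is simultaneously diagonalizable on $\mathfrak g$, yielding weight decompositions $\mathfrak g_0=\bigoplus_\lambda \mathfrak g_0^\lambda$ and $\mathfrak g_1=\bigoplus_\mu \mathfrak g_1^\mu$, with the standard property of Cartan subspaces that the zero-weight space satisfies $\mathfrak g_1^0=\mathfrak c$. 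Given $\xi=\sum_\lambda\xi_\lambda$ in $\operatorname{Lie}N(\mathfrak c)=\{\xi\in\mathfrak g_0:[\xi,\mathfrak c]\subseteq\mathfrak c\}$, one has $[\xi_0,h]=0$ (as $\xi_0$ centralizes $\mathfrak c$) and $[\xi_\lambda,h]\in\mathfrak g_1^\lambda$ for each $h\in\mathfrak c$; since the total bracket must lie in $\mathfrak c=\mathfrak g_1^0$, each nonzero-weight component $[\xi_\lambda,h]$ vanishes, forcing $[\xi,\mathfrak c]=0$, i.e. $\xi\in\operatorname{Lie}C(\mathfrak c)$. The reverse inclusion is trivial, so the Lie algebras coincide.

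For part (ii), I would transport everything to $\mathfrak g_1$ via \Cref{thm:restriction}, which gives $\CC[\mathfrak g_1]^L\cong\CC[\mathfrak c]^{W(\mathfrak c)}$; it thus suffices to know that $\CC[\mathfrak g_1]^L$ is a polynomial ring. This polynomiality of the invariant ring of the action of $L$ on $\mathfrak g_1$ (a $\theta$-representation in Vinberg's terminology) is itself one of the central theorems of the theory, and I would cite it as such---it is not among the results quoted above, but it is a standard part of Vinberg's work. Combining this with the restriction isomorphism shows $\CC[\mathfrak c]^{W(\mathfrak c)}$ is polynomial, and then Chevalley--Shephard--Todd, together with the finiteness from part (i), completes the proof.

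The main obstacle is part (ii): the polynomiality of $\CC[\mathfrak g_1]^L$ is a genuinely deep result, not a formal consequence of the definitions, and the whole argument rests on it. A secondary technical point is justifying, in part (i), the structural fact that the zero-weight space of $\mathfrak g_1$ under $\operatorname{ad}(\mathfrak c)$ is exactly $\mathfrak c$ (equivalently, that the centralizer in $\mathfrak g_1$ of a regular element of $\mathfrak c$ is $\mathfrak c$), which is a lemma about Cartan subspaces that must be invoked with care. An alternative to parts (ii)--(iii) would be to exhibit generating reflections directly---one for each restricted root $\lambda$, arising from the rank-one graded subalgebra attached to $\mathfrak g^{\pm\lambda}$---and to prove that these generate $W(\mathfrak c)$; this sidesteps the polynomiality black box but replaces it with a more delicate chamber-geometry argument, so I would favor the Chevalley--Shephard--Todd route.
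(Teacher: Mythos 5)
The paper does not prove this statement at all: its ``proof'' is a pointer to Vinberg's Proposition~8 (finiteness) and Theorem~8 (generation by complex reflections), so any honest comparison is really between your sketch and Vinberg's original argument. Your part~(i) is essentially Vinberg's finiteness proof and is sound: decomposing $\xi\in\operatorname{Lie}N(\mathfrak{c})$ into $\operatorname{ad}(\mathfrak{c})$-weight components and noting that $[\xi_\lambda,h]\in\mathfrak{g}_1^\lambda$ while $[\xi,h]$ must lie in $\mathfrak{c}\subseteq\mathfrak{g}_1^0$ does force $[\xi,\mathfrak{c}]=0$, and algebraicity of $N(\mathfrak{c})$ then gives finiteness. (One quibble: you do not need, and should not assert, that the zero-weight space of $\mathfrak{g}_1$ equals $\mathfrak{c}$ --- in general the centralizer of $\mathfrak{c}$ in $\mathfrak{g}_1$ may contain nilpotent elements; your argument only uses $\mathfrak{c}\subseteq\mathfrak{g}_1^0$ and $\mathfrak{g}_1^\lambda\cap\mathfrak{g}_1^0=0$ for $\lambda\neq 0$, which is all that is true and all that is needed.)

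The genuine gap is in parts~(ii)--(iii), and it is a circularity rather than a technical slip. The polynomiality of $\CC[\mathfrak{g}_1]^L$ is not an independent input you can cite: in Vinberg's development (and in the way this paper itself uses it in the discussion of the invariant algebra, where freeness is attributed to ``the corollary of Theorem~8''), the freeness of $\CC[\mathfrak{g}_1]^L\cong\CC[\mathfrak{c}]^{W(\mathfrak{c})}$ is \emph{deduced from} the reflection-group property via \Cref{thm:restriction} and the Chevalley direction of Chevalley--Shephard--Todd. There is no standard proof of that polynomiality that bypasses the reflection-group theorem, so running Shephard--Todd in the reverse direction on top of it assumes the conclusion. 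The route you mention only to set aside --- producing a complex reflection in $W(\mathfrak{c})$ for each restricted root via the associated rank-one graded subalgebra and showing these generate --- is not an optional alternative; it is the actual content of Vinberg's Theorem~8 and is the step your plan leaves unproved. As written, your proposal establishes finiteness and faithfulness but not the reflection property.
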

\begin{proof}
    See \cite[{Proposition 8}]{Vinberg_1976} and \cite[{Theorem 8}]{Vinberg_1976}.
\end{proof}

\subsection{Embedding into an exceptional Lie algebra}\label{sec:3graded} Let $\mathcal{H}_{333}=(\CC^3)^{\otimes 3}$. In what follows, we describe a construction of the exceptional Lie algebras $\mathfrak{e}_8$ and $\mathfrak{e}_6$ given by Vinberg and Elashvili \cite{Vinberg-Elasvili}. This is of interest to us because the $\SL_3^{\otimes 3}$-module $\mathcal{H}_{333}$ is naturally embedded in $\mathfrak{e}_6$ through this construction.

According to \cite{Vinberg-Elasvili}, $\mathfrak{e}_8\cong \mathfrak{g}'_0\oplus\mathfrak{g}'_1\oplus\mathfrak{g}'_2$ admits a decomposition as a $\ZZ_3$-graded algebra with the following components:
\[
\mathfrak{g}'_0\cong \mathfrak{sl}_9,\quad \mathfrak{g}'_1\cong \Lambda^3\CC^9,\quad \mathfrak{g}'_2\cong \Lambda^3(\CC^9)^*.
\]
The Lie bracket restricted to $\mathfrak{g}'_0\times\mathfrak{g}'_1$ is the usual action of $\mathfrak{sl}_9$ on $\Lambda^3 \CC^9$. That is,
\begin{equation}\label{eq:adjoint}
[E,v_1\wedge v_2\wedge v_3]=Ev_1\wedge v_2\wedge v_3+v_1\wedge Ev_2\wedge v_3+v_1\wedge v_2\wedge Ev_3
\end{equation}
for all $E\in\mathfrak{sl}_9$ and $v_i\in \CC^9$.

Next, we describe $\mathfrak{e}_6$ as a subspace of $\mathfrak{e}_8$ as in \cite{Vinberg-Elasvili}. We may decompose $\CC^9\cong V_1\oplus V_2\oplus V_3$ into a direct sum of 3-dimensional subspaces $V_i$. To choose coordinates, let $\{e_1,\dots,e_9\}$ be a basis of $\CC^9$ and let $V_1=\text{span}\{e_1,e_2,e_3\}$, $V_2=\text{span}\{e_4,e_5,e_6\}$, $V_3=\text{span}\{e_7,e_8,e_9\}.$ We have the injection
\begin{equation}\label{eq:grade1}
V_1\otimes V_2\otimes V_3\to \Lambda^3 \CC^9,\quad v_1\otimes v_2\otimes v_3\mapsto v_1\wedge v_2\wedge v_3
\end{equation}
into the grade-1 subspace
and a similar injection $V_1^*\otimes V_2^*\otimes V_3^*\to\Lambda^3(\CC^9)^*$ into the grade-2 subspace. We also have the injection
\begin{equation}\label{eq:grade0}
\mathfrak{sl}_3\oplus\mathfrak{sl}_3\oplus\mathfrak{sl}_3\to\mathfrak{sl}_9,\quad
(E_1,E_2,E_3)\mapsto \begin{pmatrix}
    E_1 & 0 & 0 \\
    0 & E_2 & 0 \\
    0 & 0 & E_3
\end{pmatrix}
\end{equation}
obtained by sending triples to block diagonal matrices. The union of the images of these injections spans a subalgebra of $\mathfrak{e}_8$ which is isomorphic to $\mathfrak{e}_6$. Thus, we have established the $\ZZ_3$-grading \eqref{eq:z3grading} of $\mathfrak{e}_6$. From the adjoint action \eqref{eq:adjoint} and the injections \eqref{eq:grade1} and \eqref{eq:grade0}, we have
\[
[(E_1,E_2,E_3),v_1\otimes v_2\otimes v_3]=E_1v_1\otimes v_2\otimes v_3+v_1\otimes E_2v_2\otimes v_3+v_1\otimes v_2\otimes E_3v_3
\]
for $E_i\in\mathfrak{sl}_3$ and $v_i\in \CC^3$. That is, the adjoint action
\[
\mathfrak{sl}_3^{\times 3}\to\text{End}(\mathcal{H}_{333}),\quad E\mapsto \text{ad}(E)|_{\mathcal{H}_{333}}
\]
is the usual Lie algebra action. Thus, the corresponding group representation
\[
\pi:\SL_3^{\times 3}\to\GL(\mathcal{H}_{333}),\quad A\mapsto \text{Ad}(A)|_{\mathcal{H}_{333}}
\]
is the expected one. That is, $\pi(A_1,A_2,A_3)=A_1\otimes A_2\otimes A_3$ where $A_i\in\SL_3$.

We showed that the $\SL_3^{\otimes 3}$-module $\mathcal{H}_{333}$ is isomorphic to the grade-1 subspace of the $\ZZ_3$-graded Lie algebra $\mathfrak{e}_6$. This allows us to make use of Vinberg's results in the 3-qutrit setting. To be clear, one applies the theorems of \Cref{sec:Vinberg} by making the substitutions $\mathfrak{g}_1=\mathcal{H}_{333}$, $L=\SL_3^{\otimes 3}$, and $\mathfrak{c}=\mathcal{C}$. Note also the following:
\begin{lemma}\label{lem:semisimplecritical}
    A state $\ket{\varphi}\in \mathcal{H}_{333}$ is semisimple if it is critical.
\end{lemma}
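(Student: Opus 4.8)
The plan is to deduce this directly from the third item of the Kempf-Ness theorem (\Cref{thm:Kempf-Ness}) together with the identification of groups recorded in \Cref{sec:3graded}. The essential observation I would start from is that, under the substitutions $\mathfrak{g}_1=\mathcal{H}_{333}$ and $L=\SL_3^{\otimes 3}$, the group $L$ acting on $\mathfrak{g}_1$ through $\pi$ is precisely the group $G=\SL_3\otimes\SL_3\otimes\SL_3$ acting on $\mathcal{H}=\mathcal{H}_{333}$ in the sense of \Cref{sec:somegroups} (the case $n=3$, $d_1=d_2=d_3=3$). Indeed, we computed $\pi(A_1,A_2,A_3)=A_1\otimes A_2\otimes A_3$, so the $L$-orbit and the $G$-orbit of any $\ket{\varphi}\in\mathcal{H}_{333}$ literally coincide. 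This is the bridge that lets a GIT statement and a Vinberg-theoretic statement refer to the same object.

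Granting this identification, the argument is immediate. Suppose $\ket{\varphi}\in\mathcal{H}_{333}$ is critical. By the third item of \Cref{thm:Kempf-Ness}, the $G$-orbit of $\ket{\varphi}$ is closed, since it contains a critical point, namely $\ket{\varphi}$ itself. Because $G=L$, the $L$-orbit of $\ket{\varphi}$ is therefore closed, and by \Cref{defi:closedorbit} this is exactly the statement that $\ket{\varphi}$ is semisimple. Note that only the direction critical $\Rightarrow$ semisimple is being claimed; the converse is false in general, since a semisimple element need only share its (closed) orbit with a critical point rather than being critical itself.

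The one point I expect to require care, and which constitutes the only real obstacle, is verifying that ``closed orbit'' carries the same meaning in the two sources. The Kempf-Ness theorem refers to closedness in the Euclidean (metric) topology, whereas Vinberg's \cite{Vinberg_1976}*{Proposition 3} frames semisimplicity via Zariski-closed orbits. This discrepancy is harmless: for a reductive group acting linearly on a finite-dimensional complex vector space, an orbit is Zariski-closed if and only if it is closed in the Euclidean topology. Since $\SL_3^{\otimes 3}$ is reductive and acts linearly on $\mathcal{H}_{333}$, the two notions agree and the translation is automatic, so no further computation is needed.
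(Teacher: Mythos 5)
Your proposal is correct and follows exactly the paper's own (one-line) argument: item 3 of \Cref{thm:Kempf-Ness} gives closedness of the $G$-orbit of a critical point, and \Cref{defi:closedorbit} identifies closed $L$-orbits with semisimplicity, with $L=G=\SL_3^{\otimes 3}$. Your added remarks on the Zariski/Euclidean agreement for reductive-group orbits and on the one-directional nature of the implication are sound but not needed beyond what the paper records.
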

\begin{proof}
    This follows from \Cref{thm:Kempf-Ness} and \Cref{defi:closedorbit}.
\end{proof}

\subsection{Cartan subspace of \texorpdfstring{$\mathcal{H}_{333}$}{H333}} Nurmiev showed that the span of \[\{\ket{s_1},\ket{s_2},\ket{s_3}\}\cup\{\bra{s_1},\bra{s_2},\bra{s_3}\}\] in $\mathcal{H}_{333}\oplus\mathcal{H}^*_{333}$ is a Cartan subalgebra of $\mathfrak{e}_6$ and $\mathcal{C}=\text{span}\{\ket{s_1},\ket{s_2},\ket{s_3}\}$ is a Cartan subspace \cite{Nurmiev_2000}. Thus, $\mathcal{C}$ is simultaneously the unique $((3,3,2))_3$ code up to $\SU_3^{\otimes 3}$-equivalence (\Cref{cor:Rather}) and the unique Cartan subspace of $\mathcal{H}_{333}$ up to $\SL_3^{\otimes 3}$-equivalence (\Cref{thm:semisimpleclassification}). We also have the following classification of AME states, which is analogous to \Cref{thm:semisimpleclassification,thm:weylgroupequivalence}. Note that a 3-qutrit state is AME if and only if it is critical.

\begin{theorem}[Classification of 3-qutrit AME states]\label{thm:AMEclassification}
    Every AME state in $\mathcal{H}_{333}$ is $\SU_3^{\otimes 3}$-equivalent to a point in the unique $((3,3,2))_3$ code $\mathcal{C}$. Two points in $\mathcal{C}$ are $\SU_3^{\otimes 3}$-equivalent if and only if there is a transversal gate mapping one to the other.
\end{theorem}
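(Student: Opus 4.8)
The plan is to prove the two sentences separately, in each case reducing an $\SL_3^{\otimes 3}$-statement coming from Vinberg's theory to an $\SU_3^{\otimes 3}$-statement by means of the Kempf-Ness theorem. For the first sentence I would start with an AME state $\ket{\varphi}\in\mathcal{H}_{333}$. Since a 3-qutrit state is AME if and only if it is critical, $\ket{\varphi}$ is critical, hence semisimple by \Cref{lem:semisimplecritical}. Making the substitutions $\mathfrak{g}_1=\mathcal{H}_{333}$, $L=\SL_3^{\otimes 3}$, $\mathfrak{c}=\mathcal{C}$, \Cref{thm:semisimpleclassification} then produces $g\in\SL_3^{\otimes 3}$ with $\ket{\psi}:=g\ket{\varphi}\in\mathcal{C}$. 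It remains to replace $g$ by an element of $\SU_3^{\otimes 3}$.

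The key observation that makes this replacement possible is that \emph{every} nonzero state of $\mathcal{C}$ is critical: because $\mathcal{C}$ is a pure $((3,3,2))_3$ code, \Cref{thm:obs} shows that every state of $\mathcal{C}$ is $1$-uniform, which (the local dimensions all being equal to $3$) is the same as being critical. In particular $\ket{\psi}$ is critical. Thus $\ket{\varphi}$ and $\ket{\psi}=g\ket{\varphi}$ are two critical points of the same $G$-orbit, so by item~1 of \Cref{thm:Kempf-Ness} both minimize the norm along that orbit; comparing the two minimizers forces $\bra{\varphi}g^\dagger g\ket{\varphi}=\braket{\varphi|\varphi}$. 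Item~2 of \Cref{thm:Kempf-Ness} then gives $h\in\SU_3^{\otimes 3}$ with $g\ket{\varphi}=h\ket{\varphi}$, so $\ket{\psi}=h\ket{\varphi}$ and $\ket{\varphi}$ is $\SU_3^{\otimes 3}$-equivalent to the point $\ket{\psi}\in\mathcal{C}$. This proves the first sentence.

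For the second sentence, fix $\ket{x},\ket{y}\in\mathcal{C}$. The backward implication is immediate: a transversal gate taking $\ket{x}$ to $\ket{y}$ is $\mu(g)$ for some $g\in N(\mathcal{C})\cap\SU_3^{\otimes 3}$, and $\mu(g)\ket{x}=g\ket{x}=\ket{y}$ is an $\SU_3^{\otimes 3}$-equivalence. For the forward implication, suppose $\ket{y}=k\ket{x}$ with $k\in\SU_3^{\otimes 3}\subseteq L$. Since $\ket{x},\ket{y}$ lie in $\mathcal{C}$ and are $L$-related, \Cref{thm:weylgroupequivalence} yields $w\in W(\mathcal{C})$ with $w\ket{x}=\ket{y}$; writing $w=\mu(g)$ with $g\in N(\mathcal{C})$, what must still be checked is that $w$ is a genuine transversal gate, i.e. that $g$ can be chosen in $\SU_3^{\otimes 3}$. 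This is exactly the identification of the transversal gate group with the Weyl group, equivalent to $\{g\in\SU_3^{\otimes 3}:g\mathcal{C}=\mathcal{C}\}=\{g\in\SL_3^{\otimes 3}:g\mathcal{C}=\mathcal{C}\}$.

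The main obstacle in both sentences is the same: passing from the complex reductive group $\SL_3^{\otimes 3}$ to its compact form $\SU_3^{\otimes 3}$. In the first sentence this passage is handled cleanly by Kempf-Ness, the only real input being that the Vinberg image point already lies among the critical points of $\mathcal{C}$. In the second sentence it is genuinely harder, amounting to the claim that the (finite) normalizer $N(\mathcal{C})$ is already contained in $\SU_3^{\otimes 3}$. I expect this to be the crux; the natural route is to combine \Cref{thm:kernel} (which pins down the centralizer $C(\mathcal{C})$ inside the Pauli group) with an explicit construction of special-unitary normalizer elements realizing the complex reflections that, by \Cref{thm:complexreflection}, generate $W(\mathcal{C})$.
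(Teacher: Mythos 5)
Your proposal is correct and follows essentially the same route as the paper: criticality of AME states plus \Cref{lem:semisimplecritical} and \Cref{thm:semisimpleclassification} to land in $\mathcal{C}$ via $\SL_3^{\otimes 3}$, the Kempf--Ness theorem to upgrade to $\SU_3^{\otimes 3}$, and \Cref{thm:weylgroupequivalence} for the second sentence, deferring the identification of $W(\mathcal{C})$ with the transversal gate group (i.e.\ that the generating reflections lift to elements of $N(\mathcal{C})\cap\SU_3^{\otimes 3}$) to the later computation, exactly as the paper does. Your spelled-out Kempf--Ness step (two critical points in one orbit have equal norm, so item~2 applies) is just a more explicit version of the paper's one-line appeal to \Cref{thm:Kempf-Ness}.
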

\begin{proof}
    By \Cref{thm:QMDS-pure}, since $\mathcal{C}$ is MDS, it is pure. Then, by \Cref{thm:obs}, $\mathcal{C}$ consists of critical points. Suppose $\ket{\varphi}\in\mathcal{H}_{333}$ is AME. Then, by \Cref{lem:semisimplecritical}, $\ket{\varphi}$ is semisimple and, by \Cref{thm:semisimpleclassification}, $\ket{\varphi}$ is $\SL_{3}^{\otimes 3}$-equivalent to a point in $\mathcal{C}$ which must be critical. Thus, by \Cref{thm:Kempf-Ness}, $\ket{\varphi}$ is $\SU_3^{\otimes 3}$-equivalent to a point in $\mathcal{C}$.

    Suppose two points $\ket{\varphi},\ket{\psi}\in\mathcal{C}$ are $\SU_3^{\otimes 3}$-equivalent. Since $\SU_3^{\otimes 3}\leq \SL_3^{\otimes 3}$, we may apply \Cref{thm:weylgroupequivalence} to conclude that there is an element of $W(\mathcal{C})$ mapping $\ket{\varphi}$ to $\ket{\psi}$. We will see in \Cref{sec:transversalgates} that the Weyl group $W(\mathcal{C})$ is the group of transversal gates on $\mathcal{C}$.
\end{proof}

\subsection{Four qubits}\label{sec:2graded} Another graded Lie algebra of interest in QIT is the $\ZZ_2$-graded Lie algebra
$
\mathfrak{so}_8\cong \mathfrak{sl}_2^{\times 4}\oplus (\CC^2)^{\otimes 4}.
$
The Lie bracket $\mathfrak{so}_8\times\mathfrak{so}_8\to\CC$ restricted to $\mathfrak{sl}_2^{\times 4}\times (\CC^2)^{\otimes 4}$ is defined by the usual Lie algebra action of $\mathfrak{sl}_2^{\times 4}$ on $(\CC^2)^{\otimes 4}$. Thus the 4-qubit state space $(\CC^2)^{\otimes 4}$ is isomorphic to the grade-1 subspace of $\mathfrak{so}_8$ as a $\SL_2^{\otimes 4}$-module. As in the 3-qutrit case, there is a Cartan subspace of $(\CC^2)^{\otimes 4}$ which is simultaneously a stabilizer code: namely, the subspace consisting of all points fixed by the group generated by $X\otimes X\otimes X\otimes X$ and $Z\otimes Z\otimes Z\otimes Z$. This subspace is known to be the unique $((4,4,2))_2$ code up to LU equivalence \cite{746807}. The author intends to investigate this situation more in future work.

\subsection{Invariants of 3 qutrits}\label{sec:invariantalgebra} In light of \Cref{thm:closed}, it would be useful to know all of the $\SL_D^{\otimes n}$-invariant polynomials on the space $(\CC^D)^{\otimes n}$. Although there is a known algorithm for computing these invariants up to any degree \cite{PhysRevLett.111.060502}, it quickly becomes intractable. For example, we do not have generators of the invariant ring for more than 4 qubits. See \cite{Luque_2006} for what is known about the invariants of 5 qubits. However, we know that the invariant algebra cannot be too complicated in the Vinberg setting because $\CC[\mathfrak{g}_1]^L$ is isomorphic to the algebra $\CC[\mathfrak{c}]^{W(\mathfrak{c})}$ of invariants of a finite group. In addition to being finite, the Weyl group $W(\mathfrak{c})$ is a complex reflection group, so $\CC[\mathfrak{c}]^{W(\mathfrak{c})}$ and $\CC[\mathfrak{g}_1]^L$ are both free (see the corollary of Theorem 8 in \cite{Vinberg_1976}).

Consider the $\SL_3^{\otimes 3}$-module $\mathfrak{g}_1=\mathcal{H}_{333}$. Here, the invariant algebra is known to be generated by three homogeneous polynomials of degrees $6$, $9$, and $12$ \cite[{\S~9}]{Vinberg_1976}. This implies that $W(\mathcal{C})$ is a group of order $6\times 9\times 12 = 648$ \cite[\S~1.2]{Cohen_1976}. Any element of $\mathcal{C}$ can be written as a linear combination $a\ket{s_1}+b\ket{s_2}+c\ket{s_3}$ of our chosen basis vectors. This allows us to write each element of the invariant algebra $\CC[\mathcal{C}]^{W(\mathcal{C})}\cong \CC[a,b,c]^{W(\mathcal{C})}$ in terms of the indeterminates $a,b,c$. Bremner et al. compute the fundamental $\SL_3^{\otimes 3}$-invariants \cite{Bremner2013FundamentalIF} then restrict to the Cartan subspace to find the following expressions \cites{BremnerHuOeding2014,Jaffali_2024} for the generators of $\CC[a,b,c]^{W(\mathcal{C})}$:
\begin{align*}
    I_6&=a^6+b^6+c^6-10(a^3 b^3 + a^3 c^3 + b^3 c^3),\\
    I_9&= (a^3-b^3)(a^3-c^3)(b^3-c^3),
\end{align*}
and
\begin{multline*}
    I_{12}=a^9(b^3+c^3)+b^9(a^3+c^3)+c^9(a^3+b^3) \\ -4(a^6 b^6+a^6 c^6 + b^6 c^6)+2(a^6 b^3 c^3+a^3 b^6 c^3+a^3 b^3 c^6).
\end{multline*}

\section{Computation of group generators}\label{sec:mainresults}

\subsection{Transversal gates of the \texorpdfstring{$((3,3,2))_3$}{((3,3,2))\_3} code}\label{sec:transversalgates} In this section, we find generators of the Weyl group $W(\mathcal{C})$ together with elements of the normalizer $N(\mathcal{C})$ that map to each generator under the representation $\mu:N(\mathcal{C})\to \GL(\mathcal{C})$ defined in \eqref{eq:mu}.

By \Cref{thm:complexreflection}, $W(\mathcal{C})$ is a finite complex reflection group. The finite complex reflection groups were classified by Shephard and Todd in 1954 \cite{Shephard_Todd_1954}. In \cite{Vinberg_1976}, Vinberg identified the Weyl group $W(\mathcal{C})$ associated to the $\ZZ_3$-graded Lie algebra $\mathfrak{e}_6$ as No. 25 in the Shephard and Todd's list of irreducible finite complex reflection groups \cite[{Table VII}]{Shephard_Todd_1954}. Cohen gives detailed instructions on how to compute the generators of this group \cite{Cohen_1976}, which we proceed to follow.

Given $a\in \CC^3$ and $d\in \mathbb{N}=\{1,2,\dots\}$, there is a complex reflection $s_{a,d}:\CC^3\to\CC^3$ that fixes the subspace of vectors orthogonal to $a$ and maps $a$ to $e^{2\pi \im/d}a$, given in \cite[\S~1.6]{Cohen_1976} by the formula
\begin{equation}\label{eq:reflection}
s_{a,d}(x)=x-(1-e^{2\pi\im/d})\frac{a^\dagger x}{a^\dagger a}a,\quad\forall x\in \CC^3.
\end{equation}

\begin{figure}
    \centering
    \begin{tikzpicture}[every node/.style={circle, draw, fill=white, inner sep=2pt}]
  \node (v1) at (0,0) [label=below:$e_1$] {$3$};
  \node (v2) at (3,0) [label=below:$e_2$] {$3$};
  \node (v3) at (6,0) [label=below:$e_3$] {$3$};

  \draw (v1) -- node[draw=none, fill=none, above] {} (v2)
        -- node[draw=none, fill=none, above] {} (v3);
\end{tikzpicture}
    \caption{A visualization of the vector graph $L_3$ associated to $W(\mathcal{C})$. The vectors $e_i$ are defined in \eqref{eq:vectors}. There is no edge between two vectors if and only if they are orthogonal.}
    \label{fig:placeholder}
\end{figure}

A \textit{vector graph} is a pair $\Gamma=(B,w)$ consisting of a nonempty finite collection of vectors $B\subset \CC^K$ (satisfying an additional property we do not care about) and a map $w:B\to \mathbb{N}\setminus\{1\}$ \cite[{\S~4.1}]{Cohen_1976}. Associated to a vector graph is the group $W(\Gamma)$ generated by all reflections $s_{a,w(a)}$ such that $a\in B$ \cite[{\S~4.2}]{Cohen_1976}. Set $K=3$ and consider the vector graph $L_3=(B,w)$ where $B$ consists of three unit vectors
\begin{equation}\label{eq:vectors}
e_1 = \ket{2},\quad e_2=\frac{\im}{\sqrt{3}}(\ket{0}+\ket{1}+\ket{2}),\quad e_3=\ket{1}
\end{equation}
and $w(e_i)=3$ for all $e_i\in B$. The group $W(L_3)$ is the Weyl group $W(\mathcal{C})$ \cite[\S~4.16]{Cohen_1976}. Thus, using \eqref{eq:reflection}, we find that $W(\mathcal{C})$ is generated by the matrices
$
R_i = \mathbb{I}-(1-\omega)e_i e_i^\dagger
$ where $\omega=e^{2\pi \im/3}$. Written out explicitly, these matrices are
\[
R_1=
\begin{pmatrix}
    1 & 0 & 0 \\
    0 & 1 & 0 \\
    0 & 0 & \omega
\end{pmatrix},\quad
R_2=
\frac{1}{\sqrt{3}}e^{\pi \im/6}
\begin{pmatrix}
    1 & \omega & \omega \\
    \omega & 1 & \omega \\
    \omega & \omega & 1
\end{pmatrix},\quad
R_3=
\begin{pmatrix}
    1 & 0 & 0 \\
    0 & \omega & 0 \\
    0 & 0 & 1
\end{pmatrix}.
\]
This representation of the Weyl group is in agreement with our basis $\{\ket{s_1},\ket{s_2},\ket{s_3}\}$, which can be shown by checking that the matrices above fix the invariants $I_6$, $I_9$, and $I_{12}$ from \Cref{sec:invariantalgebra}. With this knowledge, we find that the following elements of $\SU_3^{\otimes 3}$ map to each generator $R_i$ under $\mu:N(\mathcal{C})\to \GL(\mathcal{C})$.
\begin{align}\label{eq:cosetrepresentatives}
\begin{split}
\omega
\begin{pmatrix}
    1 & 0 & 0 \\
    0 & 1 & 0 \\
    0 & 0 & \omega^2
\end{pmatrix}
\otimes
\begin{pmatrix}
    1 & 0 & 0 \\
    0 & \omega^2 & 0 \\
    0 & 0 & 1
\end{pmatrix}
\otimes
\begin{pmatrix}
    \omega^2 & 0 & 0 \\
    0 & 1 & 0 \\
    0 & 0 & 1
\end{pmatrix}
&\mapsto R_1\\
\frac{1}{\sqrt{27}}e^{\pi \im/6}
\begin{pmatrix}
    \omega & 1 & 1 \\
    1 & \omega & 1 \\
    1 & 1 & \omega
\end{pmatrix}
\otimes\begin{pmatrix}
    \omega & 1 & 1 \\
    1 & \omega & 1 \\
    1 & 1 & \omega
\end{pmatrix}
\otimes \begin{pmatrix}
    \omega & 1 & 1 \\
    1 & \omega & 1 \\
    1 & 1 & \omega
\end{pmatrix}
&\mapsto R_2\\
\omega
\begin{pmatrix}
    \omega^2 & 0 & 0 \\
    0 & 1 & 0 \\
    0 & 0 & 1
\end{pmatrix}
\otimes
\begin{pmatrix}
    1 & 0 & 0 \\
    0 & \omega^2 & 0 \\
    0 & 0 & 1
\end{pmatrix}
\otimes
\begin{pmatrix}
    1 & 0 & 0 \\
    0 & 1 & 0 \\
    0 & 0 & \omega^2
\end{pmatrix}
&\mapsto R_3
\end{split}
 \end{align}
Let us briefly explain the process by which the operators on the left side of \eqref{eq:cosetrepresentatives} were found. For each $i$, we want some $Q_i\in\SL_3^{\otimes 3}$ such that $\mu(Q_i)=R_i$. Guessing that such $Q_i$ exist in $\SU_3^{\otimes 3}$, for each $i$ we used a numerical optimization algorithm to find $E_1,E_2,E_3$ in the Lie algebra of $\SU_3$ such that
\[
\|\mu'(\exp(E_1)\otimes \exp(E_2)\otimes \exp(E_3))-R_i\|=0,
\]
where the image of $\mu'$ is a matrix with entries defined by $\mu'(U)_{ij}=\bra{s_i}U\ket{s_j}.$ After this, we recognized the exact solutions and verified with a computer algebra system that \eqref{eq:cosetrepresentatives} holds. This computation gives the following result.

\begin{theorem}
    The Weyl group $W(\mathcal{C})$ is the group of transversal gates on $\mathcal{C}$.
\end{theorem}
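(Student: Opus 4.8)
The plan is to rewrite both sides of the claimed equality as images of the single representation $\mu$ and then establish the one inclusion that is not automatic. Under the substitutions $\mathfrak{g}_1=\mathcal{H}_{333}$, $L=\SL_3^{\otimes 3}$, $\mathfrak{c}=\mathcal{C}$ from \Cref{sec:3graded}, the Weyl group is $W(\mathcal{C})=\mu(N(\mathcal{C}))$ with $N(\mathcal{C})=\{g\in\SL_3^{\otimes 3}:g\mathcal{C}=\mathcal{C}\}$, while by the definition in \Cref{sec:transversal} the group of transversal gates on $\mathcal{C}$ is exactly $\mu(N(\mathcal{C})\cap\SU_3^{\otimes 3})$. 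Because $\SU_3^{\otimes 3}\leq\SL_3^{\otimes 3}$, one inclusion is free: every transversal gate already lies in $W(\mathcal{C})$. What remains is the reverse inclusion, namely that the restriction of $\mu$ to $N(\mathcal{C})\cap\SU_3^{\otimes 3}$ is still surjective onto $W(\mathcal{C})$, i.e. that every element of the Weyl group is implementable by a local unitary.

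For this I would exploit that $\mu$ is a homomorphism and $N(\mathcal{C})\cap\SU_3^{\otimes 3}$ is a subgroup, so the group of transversal gates is itself a subgroup of $W(\mathcal{C})$. It therefore suffices to exhibit, for each generator of $W(\mathcal{C})$, one preimage under $\mu$ lying in $\SU_3^{\otimes 3}$: the subgroup generated by these preimages surjects onto the subgroup generated by the generators, which is all of $W(\mathcal{C})$. The generators I would take are the three complex reflections $R_1,R_2,R_3$ obtained above from the vector graph $L_3$ via Cohen's formula \eqref{eq:reflection}; these generate $W(\mathcal{C})$ by \cite{Cohen_1976}*{4.16}, and one confirms the representation is aligned with the basis $\{\ket{s_1},\ket{s_2},\ket{s_3}\}$ by checking that $R_1,R_2,R_3$ preserve the invariants $I_6,I_9,I_{12}$ of \Cref{sec:invariantalgebra}.

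The substantive step is then the content of \eqref{eq:cosetrepresentatives}: each $R_i$ has a preimage inside $\SU_3^{\otimes 3}$. I would verify directly that the three displayed tensors genuinely belong to $\SU_3^{\otimes 3}$ and that each maps to the corresponding $R_i$ under $\mu$, i.e. that $\bra{s_j}g\ket{s_k}=(R_i)_{jk}$. For $R_1$ and $R_3$ membership in $\SU_3^{\otimes 3}$ is immediate; for $R_2$ one checks that $MM^\dagger=3I$ and $\det M=\sqrt{27}\,e^{-\im\pi/6}$, so that $e^{\im\pi/18}\tfrac{1}{\sqrt3}M\in\SU_3$ and the displayed tensor is precisely its threefold tensor power $\tfrac{1}{\sqrt{27}}e^{\im\pi/6}\,M\otimes M\otimes M$. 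With these preimages in hand, the transversal gate group contains $R_1,R_2,R_3$, hence contains $\langle R_1,R_2,R_3\rangle=W(\mathcal{C})$, which gives the reverse inclusion and therefore the equality.

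The main obstacle is not the logical assembly, which is routine once both groups are written as images of $\mu$, but the production and verification of the unitary preimages in \eqref{eq:cosetrepresentatives}. A priori the Weyl group is only guaranteed to be realized inside $\SL_3^{\otimes 3}$, and it is special to this code that the lifts can be taken \emph{special unitary} and, moreover, tensor-decomposable; finding such lifts is a genuinely nontrivial search, which the paper carries out numerically and then confirms symbolically. This is exactly the step that upgrades the abstract identity $\mu(N(\mathcal{C}))=W(\mathcal{C})$ to the physically meaningful statement that the entire Weyl group is implemented by local unitaries.
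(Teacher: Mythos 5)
Your proof is correct and follows essentially the same route as the paper: both arguments reduce the statement to exhibiting, for each generator $R_i$ of $W(\mathcal{C})$, a preimage under $\mu$ lying in $\SU_3^{\otimes 3}$, which is exactly the content of \eqref{eq:cosetrepresentatives}. The only (immaterial) difference is that the paper additionally invokes \Cref{thm:kernel} to conclude the stronger fact $N(\mathcal{C})\subseteq\SU_3^{\otimes 3}$, whereas your assembly derives the equality of the two images of $\mu$ directly from the unitary lifts of the generators, without needing the centralizer to be unitary.
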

\begin{proof}
Recall that $W(\mathcal{C})=\mu(N(\mathcal{C}))$ by definition. Thus, $N(\mathcal{C})$ is generated by the kernel $C(\mathcal{C})$ of $\mu$ together with any three representatives $Q_i\in \mu^{-1}(R_i)$ in the preimages of the generators of $W(\mathcal{C})$. By \Cref{thm:kernel}, $C(\mathcal{C})$ is a subgroup of $\SU_3^{\otimes 3}$. By \eqref{eq:cosetrepresentatives}, each $Q_i\in\SU_3^{\otimes 3}$. It follows that $N(\mathcal{C})$ is contained in $\SU_3^{\otimes 3}$.
\end{proof}

We note that a different set of generators for $W(\mathcal{C})$ is given in \cite{10.1063/5.0156805}. However, the authors of this previous work did not recognize elements of the Weyl group as transversal gates, nor did they describe the elements of $N(\mathcal{C})$, which are needed to implement the transversal gates.

\subsection{Local symmetries of the 4-qutrit AME state} The results of \Cref{sec:gates&sym} give us a connection between $N(\mathcal{C})$, $W(\mathcal{C})$, and the group of local symmetries $S(\ket{\Phi})$. Since we already understand $N(\mathcal{C})$ and $W(\mathcal{C})$, we can thus find generators of $S(\ket{\Phi})$.

\begin{corollary}\label{cor:localsym} The group $S(\ket{\Phi})$ is given by
    \[S(\ket{\Phi})=\{\overline{\mu(A_1\otimes A_2\otimes A_3)}\otimes A_1\otimes A_2\otimes A_3: A_1\otimes A_2\otimes A_3\in N(\mathcal{C})\},\]
    where the bar over the matrix $\mu(A_1\otimes A_2\otimes A_3)$ denotes entrywise complex conjugation.
\end{corollary}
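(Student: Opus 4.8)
The plan is to derive the corollary directly from Lemma~\ref{lem:localsym} by parametrizing the elements of the local symmetry group. The key observation is that an element of $S(\ket{\Phi})$ is an arbitrary element $g \in \GL_3^{\otimes 4}$, which we may write as $g = B_0 \otimes B_1 \otimes B_2 \otimes B_3$ for $B_i \in \GL_3$; absorbing scalars, I would normalize so that $B_1, B_2, B_3 \in \SL_3$ and write $B_0 = \lambda A_0$ with $A_0 \in \SL_3$ and $\lambda \in \CC$, setting $A_i = B_i$ for $i \geq 1$. This puts $g$ in exactly the form $\lambda(A_0 \otimes A_1 \otimes A_2 \otimes A_3)$ appearing in the lemma. (One should check that such a factorization of a simple tensor into $\SL_3$ factors times a scalar is always possible, which is immediate since any nonzero $B_i \in \GL_3$ equals $(\det B_i)^{1/3}$ times an $\SL_3$ matrix.)

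First I would argue that $g \in S(\ket{\Phi})$ forces $A_1 \otimes A_2 \otimes A_3 \in N(\mathcal{C})$. Assuming $g\ket{\Phi} = \ket{\Phi}$, Lemma~\ref{lem:localsym} gives $\mu(A_1 \otimes A_2 \otimes A_3) = (\lambda A_0^\top)^{-1}$, which in particular is an \emph{invertible} matrix. But $\mu$ is only defined on $N(\mathcal{C})$, so a priori we need $A_1 \otimes A_2 \otimes A_3$ to lie in $N(\mathcal{C})$ for this equation to even make sense. The cleanest way to handle this is to note that $A_1 \otimes A_2 \otimes A_3$ maps $\mathcal{C}$ onto a $3$-dimensional subspace, and the condition $g\ket{\Phi} = \ket{\Phi}$ read through the matrix $M$ of Lemma~\ref{lem:localsym} says $(A_1 \otimes A_2 \otimes A_3) M = M (\lambda A_0^\top)^{-1}$; since the columns of $M$ span $\mathcal{C}$ and $(\lambda A_0^\top)^{-1}$ is invertible, the image of $\mathcal{C}$ under $A_1 \otimes A_2 \otimes A_3$ equals the column span of $M$, which is $\mathcal{C}$ itself. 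Hence $A_1 \otimes A_2 \otimes A_3 \in N(\mathcal{C})$, and $\mu$ is defined on it.

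Having established membership in $N(\mathcal{C})$, I would invert the relation from the lemma: $\mu(A_1 \otimes A_2 \otimes A_3) = (\lambda A_0^\top)^{-1}$ is equivalent to $\lambda A_0^\top = \mu(A_1 \otimes A_2 \otimes A_3)^{-1}$, hence $\lambda A_0 = \bigl(\mu(A_1 \otimes A_2 \otimes A_3)^{-1}\bigr)^\top = \bigl(\mu(A_1 \otimes A_2 \otimes A_3)^\top\bigr)^{-1}$. The remaining task is to identify this with $\overline{\mu(A_1 \otimes A_2 \otimes A_3)}$, which is where I expect the main subtlety to lie. The point is that $\mu(A_1 \otimes A_2 \otimes A_3)$ is a transversal gate, hence by the theorem of the previous subsection it lies in the Weyl group $W(\mathcal{C})$, which is a finite group of \emph{unitary} matrices in the chosen orthonormal basis $\{\ket{s_i}\}$; for a unitary matrix $U$ one has $(U^\top)^{-1} = \overline{U}$. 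Thus $\lambda A_0 = \overline{\mu(A_1 \otimes A_2 \otimes A_3)}$, and substituting back gives $g = \overline{\mu(A_1 \otimes A_2 \otimes A_3)} \otimes A_1 \otimes A_2 \otimes A_3$, matching the asserted form.

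Conversely, for the reverse inclusion I would take any $A_1 \otimes A_2 \otimes A_3 \in N(\mathcal{C})$, set $\lambda A_0 = \overline{\mu(A_1\otimes A_2 \otimes A_3)}$ (again using unitarity to see this has the right determinant and factors as a scalar times an $\SL_3$ matrix), and run Lemma~\ref{lem:localsym} in the backward direction to conclude the displayed tensor fixes $\ket{\Phi}$. The main obstacle is precisely the step that converts the transpose-inverse into a complex conjugate: it relies essentially on the unitarity of the transversal gate representation, i.e.\ on the fact from the previous subsection that $N(\mathcal{C}) \subset \SU_3^{\otimes 3}$ so that $\mu(N(\mathcal{C})) = W(\mathcal{C})$ consists of unitaries in the orthonormal basis $\{\ket{s_i}\}$. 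Without that identification the formula in the corollary would fail, so I would make sure to invoke the preceding theorem explicitly at that point.
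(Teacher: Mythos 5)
Your proposal is correct and follows essentially the same route as the paper: factor an arbitrary element of $\GL_3^{\otimes 4}$ as $(\lambda A_0)\otimes A_1\otimes A_2\otimes A_3$ with $A_i\in\SL_3$, apply \Cref{lem:localsym}, and use the unitarity of the image of $\mu$ (from the preceding theorem, $N(\mathcal{C})\subset\SU_3^{\otimes 3}$) to identify the inverse transpose with the entrywise conjugate. Your additional argument via the matrix $M$ that a local symmetry forces $A_1\otimes A_2\otimes A_3\in N(\mathcal{C})$ is a detail the paper leaves implicit, and it is a worthwhile one to make explicit.
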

\begin{proof}
    Notice the following facts and apply \Cref{thm:localsym}. Every element of $\GL_3^{\otimes 4}$ has the form $(\lambda A_0)\otimes A_1\otimes A_2\otimes A_3$, where $\lambda\in \CC\setminus\{0\}$ and each $A_i\in\SL_3$. Every element in the image of $\mu$ is unitary and taking the inverse transpose of a unitary matrix has the effect of conjugating every entry of the matrix.
\end{proof}

\begin{theorem}
    The group $S(\ket{\Phi})$ is of order 5,832 and is generated by the five LU operators displayed in \Cref{fig:localsymmetrygroup}.
\end{theorem}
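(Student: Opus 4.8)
The plan is to deduce everything from \Cref{cor:localsym}, which presents every element of $S(\ket{\Phi})$ in the form $\overline{\mu(g)}\otimes g$ with $g=A_1\otimes A_2\otimes A_3\in N(\mathcal{C})$. First I would upgrade this set-level description to a group isomorphism $\phi\colon N(\mathcal{C})\to S(\ket{\Phi})$, $g\mapsto\overline{\mu(g)}\otimes g$. It is a homomorphism since $\mu$ is one and entrywise conjugation satisfies $\overline{AB}=\overline{A}\,\overline{B}$; it is injective because $g$ is recovered from the last three tensor factors of $\overline{\mu(g)}\otimes g$; and it is surjective precisely by \Cref{cor:localsym}. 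Once $\phi$ is known to be an isomorphism, both the order and a generating set of $S(\ket{\Phi})$ can be read off from $N(\mathcal{C})$.

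For the order, I would use the defining short exact sequence $1\to C(\mathcal{C})\to N(\mathcal{C})\xrightarrow{\mu}W(\mathcal{C})\to 1$, so that $|N(\mathcal{C})|=|C(\mathcal{C})|\cdot|W(\mathcal{C})|$. The factor $|W(\mathcal{C})|=6\times 9\times 12=648$ was recorded in \Cref{sec:invariantalgebra}, so it remains to compute $|C(\mathcal{C})|$. By \Cref{thm:kernel}, $C(\mathcal{C})=\mathcal{S}=\langle X^{\otimes 3},Z^{\otimes 3}\rangle$. Using $X^3=Z^3=\mathbb{I}$ each generator has order $3$, and the relation $ZX=\omega XZ$ gives $ZXZ^{-1}X^{-1}=\omega\mathbb{I}$, whence the group commutator is $(\omega\mathbb{I})^{\otimes 3}=\omega^3\mathbb{I}=\mathbb{I}$. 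Thus the two generators commute, $\mathcal{S}\cong\ZZ_3\times\ZZ_3$ has order $9$, and $|S(\ket{\Phi})|=|N(\mathcal{C})|=9\cdot 648=5832$.

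For the generators, I would push the generating set of $N(\mathcal{C})$ found in \Cref{sec:transversalgates} through $\phi$. That set consists of the two kernel generators $X^{\otimes 3},Z^{\otimes 3}$ together with the three representatives $Q_1,Q_2,Q_3$ of \eqref{eq:cosetrepresentatives} satisfying $\mu(Q_i)=R_i$. Because $X^{\otimes 3},Z^{\otimes 3}\in\ker\mu$, their images are $\mathbb{I}\otimes X\otimes X\otimes X$ and $\mathbb{I}\otimes Z\otimes Z\otimes Z$; the remaining three images are $\overline{R_i}\otimes Q_i$, obtained by direct substitution of the matrices $R_i$ and $Q_i$. These five operators are exactly those displayed in \Cref{fig:localsymmetrygroup}, and since a generating set of $N(\mathcal{C})$ maps to a generating set under the isomorphism $\phi$, they generate $S(\ket{\Phi})$.

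The hard part will be the bookkeeping rather than any single deep step: I would need to confirm that $\phi$ is a bona fide isomorphism, so that the order and a generating set both transfer from $N(\mathcal{C})$ in one stroke, and—inside the order count—to verify that $X^{\otimes 3}$ and $Z^{\otimes 3}$ commute with no leftover phase. This last check is what collapses $|C(\mathcal{C})|$ to $9$ rather than $27$, and thereby pins $|S(\ket{\Phi})|$ at exactly $5832$.
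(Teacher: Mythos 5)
Your route is the same as the paper's---read everything off from \Cref{cor:localsym}, obtain the order from $|C(\mathcal{C})|\cdot|W(\mathcal{C})|$, and push a generating set of $N(\mathcal{C})$ forward---but the step where you declare $\phi\colon g\mapsto\overline{\mu(g)}\otimes g$ to be \emph{injective} is false, and this is a genuine gap rather than bookkeeping. You cannot ``recover $g$ from the last three tensor factors'' of $\overline{\mu(g)}\otimes g$: the factors of a decomposable operator $A\otimes B$ are determined only up to reciprocal scalars, so $\phi(g)=\phi(g')$ whenever $g'=\lambda g$ and $\overline{\mu(g')}=\lambda^{-1}\overline{\mu(g)}$. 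Concretely, the scalar operators $\lambda\,\mathbb{I}^{\otimes 3}$ with $\lambda^3=1$ lie in $\SL_3^{\otimes 3}$, preserve $\mathcal{C}$, and hence lie in $N(\mathcal{C})$ (under $\mu$ they map onto the order-$3$ center of $W(\mathcal{C})\cong$ Shephard--Todd No.~25, whose center has order $\gcd(6,9,12)=3$); for such $\lambda$ one computes $\phi(\lambda\,\mathbb{I}^{\otimes 3})=\bar\lambda\,\mathbb{I}\otimes\lambda\,\mathbb{I}^{\otimes 3}=|\lambda|^2\,\mathbb{I}^{\otimes 4}=\mathbb{I}^{\otimes 4}$. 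So $\ker\phi\cong\ZZ_3$ and $\phi$ is exactly $3$-to-$1$, which means your own argument, carried out correctly, yields $|S(\ket{\Phi})|=|N(\mathcal{C})|/3=1944$ rather than $5832$. (The same overcounting is implicit in the paper's proof, which tacitly treats distinct pairs $(U_0,\,U_1\otimes U_2\otimes U_3)$ as giving distinct operators $\bar U_0\otimes U_1\otimes U_2\otimes U_3$; the pairs $(U_0,h)$ and $(\omega U_0,\omega h)$ do not.)

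The rest of your write-up is sound and in places more careful than the paper: your verification that $X^{\otimes 3}$ and $Z^{\otimes 3}$ commute with no residual phase, so that $|C(\mathcal{C})|=9$, is a detail the paper omits, and the generating-set claim needs only that $\phi$ is a \emph{surjective} homomorphism, which it is---so the five displayed operators do generate $S(\ket{\Phi})$ regardless of the fiber size. But as written the injectivity step fails, the order does not transfer from $N(\mathcal{C})$ ``in one stroke,'' and the corrected count in fact disagrees with the order asserted in the statement.
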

\begin{proof}
    By \Cref{cor:localsym}, $S(\ket{\Phi})$ consists of all operators of the form $\bar{U}_0\otimes U_1\otimes U_2\otimes U_3$ where $U_0\in W(\mathcal{C})$ and $U_1\otimes U_2\otimes U_3\in \mu^{-1}(U_0)$. Let $Q_i\in\mu^{-1}(R_i)$ for $i\in\{1,2,3\}$ be the three coset representatives from \eqref{eq:cosetrepresentatives}. Every element of $\mu^{-1}(R_i)$ has the form $PQ_i$ for some $P\in C(\mathcal{C})$ in the kernel of $\mu$. Since $C(\mathcal{C})$ is generated by $X^{\otimes 3}$ and $Z^{\otimes 3}$ (\Cref{thm:kernel}) and the Weyl group $W(\mathcal{C})$ is generated by $R_1$, $R_2$, $R_3$, we conclude that $S(\ket{\Phi})$ is generated by $\mathbb{I}\otimes X^{\otimes 3}$, $\mathbb{I}\otimes Z^{\otimes 3}$, and $\bar{R}_i\otimes Q_i$ for $i=1,2,3$. These are the matrices in \Cref{fig:localsymmetrygroup}.
    
    Recall from \Cref{sec:invariantalgebra} that $W(\mathcal{C})$ is a group of order $648$. We determine the order of $S(\ket{\Phi})$ from the calculation $|W(\mathcal{C})|\times |C(\mathcal{C})|=648\times 9=5,832.$
\end{proof}

\begin{figure}
    \centering
    \[
\begin{pmatrix}
    1 & 0 & 0 \\
    0 & 1 & 0 \\
    0 & 0 & 1
\end{pmatrix}
\otimes
\begin{pmatrix} 0 & 0 & 1\\
1 & 0 & 0\\
0 & 1 & 0
\end{pmatrix}
\otimes
\begin{pmatrix} 0 & 0 & 1\\
1 & 0 & 0\\
0 & 1 & 0
\end{pmatrix}
\otimes
\begin{pmatrix} 0 & 0 & 1\\
1 & 0 & 0\\
0 & 1 & 0
\end{pmatrix}
\]
    \[
\begin{pmatrix}
    1 & 0 & 0 \\
    0 & 1 & 0 \\
    0 & 0 & 1
\end{pmatrix}
\otimes
\begin{pmatrix}
    1 & 0 & 0 \\
    0 & \omega & 0\\
    0 & 0 & \omega^2
\end{pmatrix}
\otimes
\begin{pmatrix}
    1 & 0 & 0 \\
    0 & \omega & 0\\
    0 & 0 & \omega^2
\end{pmatrix}
\otimes
\begin{pmatrix}
    1 & 0 & 0 \\
    0 & \omega & 0\\
    0 & 0 & \omega^2
\end{pmatrix}
\]
\[
\omega
\begin{pmatrix}
    1 & 0 & 0 \\
    0 & 1 & 0 \\
    0 & 0 & \omega^2
\end{pmatrix}
\otimes
\begin{pmatrix}
    1 & 0 & 0 \\
    0 & 1 & 0 \\
    0 & 0 & \omega^2
\end{pmatrix}
\otimes
\begin{pmatrix}
    1 & 0 & 0 \\
    0 & \omega^2 & 0 \\
    0 & 0 & 1
\end{pmatrix}
\otimes
\begin{pmatrix}
    \omega^2 & 0 & 0 \\
    0 & 1 & 0 \\
    0 & 0 & 1
\end{pmatrix}
\]
\[
\frac{\omega^2}{9}
\begin{pmatrix}
    \omega & 1 & 1 \\
    1 & \omega & 1 \\
    1 & 1 & \omega
\end{pmatrix}
\otimes
\begin{pmatrix}
    \omega & 1 & 1 \\
    1 & \omega & 1 \\
    1 & 1 & \omega
\end{pmatrix}
\otimes
\begin{pmatrix}
    \omega & 1 & 1 \\
    1 & \omega & 1 \\
    1 & 1 & \omega
\end{pmatrix}
\otimes
\begin{pmatrix}
    \omega & 1 & 1 \\
    1 & \omega & 1 \\
    1 & 1 & \omega
\end{pmatrix}
\]
\[
\omega
\begin{pmatrix}
    1 & 0 & 0 \\
    0 & \omega^2 & 0 \\
    0 & 0 & 1
\end{pmatrix}
\otimes
\begin{pmatrix}
    \omega^2 & 0 & 0 \\
    0 & 1 & 0 \\
    0 & 0 & 1
\end{pmatrix}
\otimes
\begin{pmatrix}
    1 & 0 & 0 \\
    0 & \omega^2 & 0 \\
    0 & 0 & 1
\end{pmatrix}
\otimes
\begin{pmatrix}
    1 & 0 & 0 \\
    0 & 1 & 0 \\
    0 & 0 & \omega^2
\end{pmatrix}
\]
    \caption{Generators of the local symmetry group $S(\ket{\Phi})$. We let $\omega=e^{2\pi \im/3}$.}
    \label{fig:localsymmetrygroup}
\end{figure}

\section{Conclusions and outlook}
We proved that a certain construction gives a bijection between LU orbits of normalized perfect tensors in $(\CC^D)^{\otimes n}$ and LU orbits of $((n-1,D,n/2))_D$ MDS codes, extending prior work of Rains, Huber, and Grassl \cite{Rains1996QuantumWE, Huber2020}. Furthermore, the local symmetries of an $r$-uniform state tell us about the transversal gates of its corresponding MDS code; this adds to the practical meaning of a state's local symmetries, which is known to give information about a state's reachability \cite{PhysRevLett.118.040503}. On the other hand, the transversal gates of a $((n-1,D,\frac{n}{2}))_D$ code do not determine the local symmetries of its corresponding perfect tensor unless the local symmetries are all LU operators. A possible avenue for future research is to consider more general situations where perfect tensors only have LU local symmetries. In the present work, we focus on the 4-qutrit AME state and the $((3,3,2))_3$ code. This is motivated by a connection to the graded Lie algebra $\mathfrak{e}_6$, which allows for the application of Vinberg's theory. As mentioned in \Cref{sec:2graded}, the state space of four qubits can also be embedded in a graded Lie algebra; the author intends to study further the consequences of this in future work.

\section{Acknowledgements}
I thank Eric Kubischta for his comments on the first draft of this paper.

\section{Declarations}

\noindent\textbf{Funding declaration.} This research received no external funding.
\bigskip

\noindent\textbf{Ethics declaration.} Not applicable.
\bigskip

\noindent\textbf{Consent to participate.} Not applicable.
\bigskip

\noindent\textbf{Consent to publish.} Not applicable.
\bigskip

\noindent\textbf{Data availability.} No datasets were generated or analyzed during this study. Questions regarding the non–database-related aspects of this work may be directed to the corresponding author at yzt0060@auburn.edu.

\printbibliography
\end{document}